\newcommand{\G}{\mathcal{G}} 
\newcommand{\Int}{\text{Int}} 
\newcommand{\Bool}{\text{Bool}} 
\newcommand{\programs}{\mathbf{P}} 
\newcommand{\templatetrees}{\mathcal{T}} 
\theoremstyle{definition}
\newtheorem{definition}{Definition}
\newtheorem{theorem}{Theorem}
\newtheorem{lemma}{Lemma}
\definecolor{codegreen}{rgb}{0,0.6,0}
\definecolor{codegray}{rgb}{0.5,0.5,0.5}
\definecolor{codepurple}{rgb}{0.58,0,0.82}
\definecolor{backcolour}{rgb}{0.95,0.95,0.92}
\definecolor{varnodecyan}{rgb}{0.2784313725490196, 0.9529411764705882, 1.0}
\lstdefinestyle{mystyle}{
    backgroundcolor=\color{backcolour},   
    commentstyle=\color{codegreen},
    keywordstyle=\color{magenta},
    numberstyle=\tiny\color{codegray},
    stringstyle=\color{codepurple},
    basicstyle=\ttfamily\footnotesize,
    breakatwhitespace=false,         
    breaklines=true,                 
    captionpos=b,                    
    keepspaces=true,                 
    numbers=left,                    
    numbersep=5pt,                  
    showspaces=false,                
    showstringspaces=false,
    showtabs=false,                  
    tabsize=2
}
\lstdefinelanguage{Julia}{
    morekeywords={abstract,begin,break,case,catch,const,continue,do,else,elseif,end,export,false,for,function,global,if,import,importall,immutable,local,macro,module,otherwise,quote,return,struct,switch,true,try,type,typealias,using,while,
    },
    sensitive=true,
    morecomment=[l]{\#},
    morecomment=[n]{\#=}{=\#},
    morestring=[s]{"}{"},
    morestring=[m]{'}{'},
}
\newcommand{\solvername}{\textsf{BART}\xspace}
\newcommand{\cstrname}[1]{\texttt{#1}}
\begin{document}

\title{Modelling Program Spaces in Program Synthesis with Constraints}

\author{Tilman Hinnerichs}
\email{t.r.hinnerichs@tudelft.nl}
\authornote{Both authors contributed equally to this research.}
\orcid{1234-5678-9012}
\affiliation{%
  \institution{TU Delft}
  \city{Delft}
  \country{Netherlands}
}

\author{Bart Swinkels}
\email{b.j.a.swinkels@student.tudelft.nl}
\authornotemark[1]
\affiliation{%
  \institution{TU Delft}
  \city{Delft}
  \country{Netherlands}
}

\author{Jaap de Jong}
\email{j.dejong-18@student.tudelft.nl}
\affiliation{%
  \institution{TU Delft}
  \city{Delft}
  \country{Netherlands}
}

\author{Reuben Gardos Reid}
\email{r.j.gardosreid@tudelft.nl}
\affiliation{%
  \institution{TU Delft}
  \city{Delft}
  \country{Netherlands}
}

\author{Tudor Magirescu}
\email{magirescu@student.tudelft.nl}
\affiliation{%
  \institution{TU Delft}
  \city{Delft}
  \country{Netherlands}
}

\author{Neil Yorke-Smith}
\email{n.yorke-smith@tudelft.nl}
\affiliation{%
  \institution{TU Delft}
  \city{Delft}
  \country{Netherlands}
}
\author{Sebastijan Dumancic}
\email{s.dumancic@tudelft.nl}
\affiliation{%
  \institution{TU Delft}
  \city{Delft}
  \country{Netherlands}
}

\renewcommand{\shortauthors}{Hinnerichs et al.}

\begin{abstract}
A core challenge in program synthesis is taming the large space of possible programs.
Since program synthesis is essentially a combinatorial search, the community has sought to leverage powerful combinatorial constraint solvers.
Here, constraints are used to express the program semantics, but not as a potentially potent tool to remove unwanted programs.
Recent inductive logic programming approaches introduce constraints on the program’s syntax to be synthesized.
These syntactic constraints allow for checking and propagating a constraint without executing the program, and thus for arbitrary operators.
In this work, we leverage syntactic constraints to model program spaces, defining not just solutions that are feasible, but also ones that are likely useful.
To demonstrate this idea, we introduce \solvername, a solver that efficiently propagates and solves these constraints.
We evaluate \solvername on program space enumeration tasks, finding that, on the one hand, the constraints eliminate up to 99\% of program space, and that, on the other hand, investing in modelling program spaces pays off, reducing enumeration time significantly.
\end{abstract}

\begin{CCSXML}
<ccs2012>
    <concept>
        <concept_id>10011007.10011006.10011050.10011056</concept_id>
            <concept_desc>Software and its engineering~Programming by example</concept_desc>
            <concept_significance>500</concept_significance>
    </concept>
</ccs2012>
\end{CCSXML}

\ccsdesc[500]{Software and its engineering~Programming by example}
\keywords{Program Synthesis, Constraints, Automated Programming}

\received{20 February 2007}
\received[revised]{12 March 2009}
\received[accepted]{5 June 2009}

\maketitle


\section{Introduction}
\label{sec:intro}

The goal of program synthesis is simple: Let computers write their own code.
Given the user's intent and a target language, a program synthesis algorithm returns the target program.
Here, the intent is formalized by a \textit{specification} and language described by a \textit{grammar}.
Program synthesis is often framed as a combinatorial search: To find a solution, the synthesizer has to enumerate (possibly all) programs that follow the grammar.
Unfortunately, even with a relatively simple grammar, the spanned space of possible programs usually grows exponentially~\cite{GulwaniPS17}.
Hence, a core challenge in program synthesis is taming and restricting the large space of possible programs.
Even modelling the program space is a challenge:
On one hand, we want the program space to be expressive enough to tackle real-world problems, but on the other, also small enough so it can be searched effectively.

Combinatorial constraint solvers are a natural and common choice to approach program synthesis: 
They are efficient at combinatorial search, and constraints can easily express specifications and help to restrict program spaces by, e.g., removing redundancies. 
Thus, the community has sought to leverage powerful 
constraint solving paradigms, such as Satisfiability (SAT), Satisfiability Modulo Theory (SMT), and Answer Set Programming (ASP) solvers~\citep{SyGuSAlur13,cvc5,z3, Brahma2011Gulwani}.

However, the expressivity and current usage of constraints in program synthesis are limited.
In Syntax-Guided Synthesis (SyGuS)~\cite{SyGuSAlur13}, a common program synthesis paradigm, constraints express the program semantics and thus allow the solvers to execute programs.
This, however, requires knowing every operator’s semantics. 
Adapting to a new domain -- by adding a new operator or a new type of constraint, for instance -- thus depends on defining an entire theory and the operator's behaviour.
For example, defining the semantics of every operator of the programming language Julia into SMT is infeasible, due to loops, meta-programming, and others, which have no equivalent in SMT.
One can argue that this non-trivial step, namely encoding an entire programming language's semantics into constraints, prevented the community from moving beyond standard benchmarks~\cite{semgus} such as Linear Integer Arithmetic and String Theory~\cite{SMTlibBarrettMRST10}. 
Yet, it is intuitive to formulate \textit{useful} constraints when synthesizing loops in Julia:
For example, the expression after \texttt{while true} should contain a \texttt{break} statement.
While we will not go beyond common synthesis benchmarks in this paper, we aim to provide a framework that allows us to formulate such constraints.

A second use of constraints in program synthesis is as a potentially potent tool to remove unwanted programs, also called `pruning' in constraint-solving.  
In other fields, such as constraint satisfaction problem (CSP) solving~\cite{HandbookOFCP} or inductive logic programming (ILP) \cite{ILPat302022Cropper}, constraints are used to induce a \textit{language bias}~\cite{MetaruleReductionCropperT20,ProgolMuggleton95}. 
That is, to guide the search -- defining not just solutions that are feasible, but also ones that are \textit{likely useful}.
Specifically, in CVC~\cite{cvc5} and other SyGuS synthesizers, constraints are not used beyond expressing semantics and specification of a program synthesis problem. 
We aim to extend the benefits of this idea in program synthesis.

\paragraph{Syntactic constraints}
ILP, i.e., program synthesis for logic programs, has considered \textit{syntactic constraints}, a new type of constraint.
Recent ILP approaches, like Popper~\cite{POPPER}, introduce constraints on the program's syntax to be synthesized~\cite{MuggletonLT15,ILPat302022Cropper}.
Popper uses constraint solvers (namely ASP) to represent and enumerate program space, but does not leverage constraints to \textit{describe a program's semantics}.
For logic programs, the notion of constraints can be checked on a syntactical level, without executing the program.
We aim to generalize syntactic constraints for general program synthesis.

Using syntactic constraints has several advantages.
First, it allows us to treat operators as black boxes and formulate constraints over them; adding or changing operators becomes straightforward. 
ILP and Popper thus allow us to tackle a much wider variety of domains.
A second advantage of using syntactic constraints is: not having to evaluate programs to test constraints accelerates propagation -- a core of constraint-solving algorithms.
A third advantage is that many such constraints, such as breaking symmetries or forbidding a certain program structure, directly follow from the domain, are straightforward to derive, and are valid across problems. 

Generalizing syntactic constraints beyond logic programs, and how to best use solvers to express and iterate programs that do not violate them, is not obvious.
Firstly, common SyGuS solvers like CVC5 are neither suitable to represent nor to efficiently propagate syntactic constraints.
These solvers are tailored to efficiently represent semantically-related programs, e.g., evaluating to the same output.
Contrary, a syntactical constraint prevents programs with similar syntax from being generated.
For example, preventing the redundant addition of $0$ removes the syntactically overlapping programs $x+0$, $1+0$, and $(x+1)+0$.
To efficiently propagate these syntactic constraints, a solver requires an efficient representation of syntactically similar programs, which common SyGuS solvers do not provide.
Further, expressing and representing syntactic constraints in existing SyGuS solvers is hard.
For example, CVC5 is built around decision variables, which dictate the choice of constraints, but does not provide explicit access to them.
Thus, formulating purely syntactic constraints is challenging, if not impossible.

Secondly, approaches like Popper, too, fall short when applied to program synthesis directly.
First, Popper is limited to ILP and hence only synthesizes logic programs. 
Moreover, constraints in Popper and other synthesizers \cite{MuggletonLT15,PSusingCSPAhlgrenY13} are encoded at the propositional level, each removing a single or only a few programs.
Indeed, this is feasible for ILP, but not for general program synthesis, where many more constraints must be derived and each propagated individually.
We observe, however, that many discovered constraints are similar to each other in form and operators used.
If they were expressed on a first-order level, constraints could be represented more compactly and propagated more efficiently. 

\paragraph{This paper.}
This paper uses syntactic constraints to \textit{model} program spaces.
We aim to invest in defining the syntactic space of programs, beyond formulating a context-free grammar.
That is, we use syntactic constraints to prune the search space before (and during) the search.
We follow a central claim of ILP approaches: Many useful and highly impactful constraints can be expressed purely syntactically, also in general program synthesis.
To express these constraints, this paper contributes an extensible language of constraints that allows us to express syntactic constraints.

In line with this goal, this paper contributes to the literature a new constraint solver, named \solvername, tailored towards efficiently solving syntactical constraints in program synthesis.
Crucially, \solvername leverages the inference strength of syntactic constraints to remove invalid programs \textit{before they are enumerated}.
Based on solvers in the constraint-solving community, we carefully choose the abstractions to make syntactic constraints easier to express.


Altogether, the novel solver holds three noteworthy innovations:
\begin{enumerate}
    \item 
\solvername uses abstract syntax trees (ASTs) as its core abstraction to express constraints.
By having the constraints and the internal data structures of \solvername represented as ASTs, it is easier to represent and propagate syntactical constraints.
Our constraints are \textit{first-order}, meaning that nodes in the ASTs can be a range of values or variables.
Thus, a single first-order AST can represent a set of grounded ASTs, allowing for a compact representation.
The representational choice of first-order ASTs hence overcomes Popper's.

\item
\solvername builds upon ASTs to define \textit{simple program spaces}: a sub-space in which syntactic constraints are easier to enforce.
The power of constraint solving comes from working with problems of fixed structure (in terms of variables and constraints).
We innovate by introducing a key notion of what constitutes a fixed-structure problem in synthesis: a set of all programs represented with the AST of the same \textit{shape} and with a type associated with every node.
We term this structure \textit{uniform tree}.
\solvername \textit{lazily constructs uniform trees of increasing complexity as needed} and actively discards the ones that are no longer needed.

\item
\solvername introduces a set of concrete syntactic constraints together with their propagators. 
This set is fully extendable and provides basic functionality such as forbidding or enforcing certain sub-programs, as well as symmetry breaking. 
We further support the conjunction of arbitrary constraints.
Note that our ambition does not include introducing a standard constraint library for program synthesis, akin to the MiniZinc language or SMT-LIB.  
Instead, our work presents several constraints that we found useful for pruning symmetries and provides a general framework for integrating new constraints so that the community can build upon them.
While limited, this set already allows for the expression of a large portion of relevant syntactic constraints, as shown in the experiments.
We eventually provide proofs for the soundness and correctness of our propagators, i.e., we prove arc consistency~\cite{HandbookOFCP}.
\end{enumerate}

To demonstrate the benefits of using syntactic constraints for modelling program spaces and of a dedicated constraint solver, we evaluate our contribution on six program space enumeration tasks.
Specifically, we impose a set of intuitive syntactic constraints, such as those generated by Ruler~\cite{RULER2021Nandi}, that eliminate redundant programs, e.g., sorting a list twice is equivalent to sorting it once.
We then compare (i) how many programs the constraints eliminate, and (ii) how long it takes to enumerate all \textit{valid} programs.
We demonstrate that syntactic constraints eliminate a significant part of program space on standard benchmarks (eliminating up to 99\% of programs).
Indeed, because its constraint propagation avoids ever enumerating programs that violate constraints, \solvername enumerates program spaces between two and three magnitudes faster.
We further show the speed-up through first-order constraints, and how syntactic constraints can be utilized to guide existing state-of-the-art synthesizers, namely Probe~\cite{ProbeBarkePP20} and EUSolver~\cite{EUSolverSiYDNS19}.

\paragraph{Contributions and organisation}
In summary, the contributions of this paper to the literature are:
\begin{itemize}
    \item a definition of a \textit{constrained program space} which incorporates constraints over program structure into grammars;
    \item a set of constraints that help eliminate unwanted programs from program spaces;
    \item a constraint solver tailored towards program synthesis problems; and
    \item an integration of the solver into two common families of search procedures, top-down and bottom-up search; 
\end{itemize}

The remainder of the paper is structured as follows.
We first introduce a motivating example (Section~\ref{sec:motivating}) and revisit concepts in program synthesis and constraint programming, upon which we aim to build \solvername (Section~\ref{sec:background}).
In Section~\ref{sec:problem_statement}, we concretely define the term \textit{constrained program space} and formally state the problem we aim to solve. 
Section~\ref{sec:method_overview} provides an overview of and intuitions for our solver. 
Sections~\ref{sec:method:shaping} to~\ref{sec:search} are structured as follows:
We start by defining a language of constraints we want to express in Section~\ref{sec:method}. 
Second, we define the architecture and interface of \solvername and how to concretely propagate the constraints within simple program spaces in Section \ref{sec:method:solving}.
Third, we describe how to concretely implement top-down and bottom-up iterators using \solvername in Section~\ref{sec:search}.
We evaluate our method experimentally in Section~\ref{sec:experiments}.
In Section~\ref{sec:related_work}, we consider related work and conclude with a summary and potential future work in Section~\ref{sec:conclusion_future}.

\section{Motivating Example}
\label{sec:motivating}

In this section, we present a motivating example of how constraints can be a powerful tool to model program spaces.
We highlight constraints for effectively removing redundant programs and how to use them to guide the search. 

%

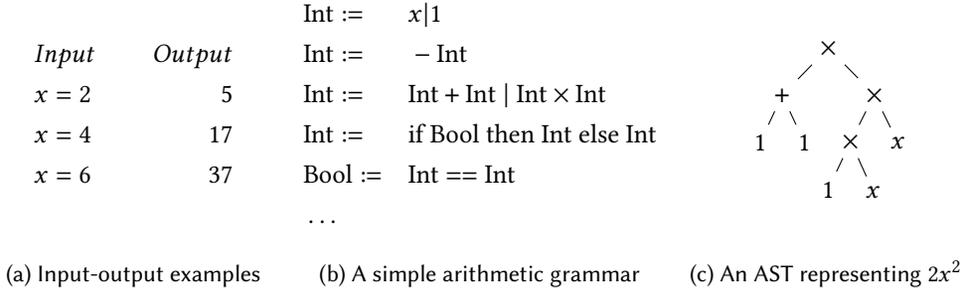
\begin{figure}[t]
    \begin{subfigure}{0.3\textwidth}
        \centering
         \begin{align*}
            &Input&Output&\\
            &x=2&5&\\
            &x=4&17&\\
            &x=6&37&\\
        \end{align*}
        \caption{Input-output examples}    
        \label{fig:LIA_specification}
    \end{subfigure}
    \begin{subfigure}{0.35\textwidth}
        \begin{align*}
            &\Int :=&& x | 1 \\
            &\Int :=&& -\Int \\
            &\Int :=&& \Int + \Int\ |\ \Int \times \Int \\
            &\Int :=&& \text{if Bool then Int else Int}\\
            &\Bool :=&& \Int == \Int\\
            &\dots&&
        \end{align*}
        \caption{A simple arithmetic grammar}
        \label{fig:LIA_grammar}
    \end{subfigure}
    \begin{subfigure}{0.3\textwidth}
        \centering
        \raisebox{5mm}{
        \begin{forest}
         [$\times$
            [$+$
                [$1$]
                [$1$]
            ]
            [$\times$
                 [$\times$
                    [$1$]
                    [$x$]
                ]
                [$x$]
            ]
         ]
        \end{forest}}
        \caption{An AST representing $2x^2$}
        \label{fig:LIA_program}
    \end{subfigure}
    \caption{A simple arithmetic program synthesis problem (LIA) and an example of a complete program. Figure (a) shows a specification by input-output examples, with Figure (b) describing the context-free grammar of possible programs. 
    Figure (c) shows a valid yet semantically redundant candidate program, as $1 \times x$ is equivalent to $x$.}
    \label{fig:LIA_example}
\end{figure}

\paragraph{Test-driven programming.}
Suppose that we are given a list of unit tests and need to find a function that passes them.
The unit tests define a list of integer inputs and test whether the function returns the desired integer outputs when executed on the respective input (\Cref{fig:LIA_specification}).
The programming language is rather simple and defines a range of operations over integers (\Cref{fig:LIA_grammar}).
This language is also called a linear-integer arithmetic (LIA) grammar.

\subsection{Using Syntactic Constraints to Remove Redundant Programs}
Given the seemingly simple problem, we want to quickly find a solution. 
However, the search space grows exponentially with the depth of the program.
Further, as the grammar contains many operators, the search space becomes quickly infeasibly large.
Fortunately, there are many redundant programs and symmetries. 
For example, \Cref{fig:LIA_program} contains the valid but redundant sub-program $1 \times x$.
Similarly, the operators $\Int + \Int$ and $\Int \times \Int$ are commutative, and thus a source of symmetry in the program space.
While holding semantic information, both symmetries can be enforced entirely on the syntactic level.
These syntactic constraints are easily discovered or given for many synthesizers, and straightforward to check on a syntactic level.

\paragraph{Simple search spaces.}
Notably, $+$ and $\times$ share a similar \textit{shape}: Both operate over integers and take two arguments. 
From a syntactic view, they behave exactly the same.
Hence, constraints enforced on the integer arguments of the addition are likely also applicable to the arguments of the multiplication.
We exploit this principle with the notion of \textit{simple search spaces}, that is, search spaces where, no matter the choice of operators, the \textit{shape of the program} does not change.

Simple search spaces make it easier to remove programs like $1\times x$, $1\times (x+1)$, \dots; programs that all follow a similar syntactic template $1\times \Int$.
Notably, we do not have to enforce this constraint everywhere, but only at the nodes in the AST where this template can occur, i.e., they \textit{match}.  
Matching template to AST nodes, our constraint system is centered around the speed and correctness of the matching function. 

\paragraph{Constraints over shapes.}
Instead of expressing a constraint on every program individually, we can express constraints over shapes, like $1\times \Int$.
We express these tree-like shapes using \textit{template trees}, which allow for two special types of first-order nodes.

First, we want to break the commutative symmetry for both $+$ and $\times$. 
We observe that many constraints apply on the same template or shape, just like for $+$ and $\times$.
Nodes in template trees can thus be a range of possible values they match. 
The template thus becomes $\Int \{+,\times\} \Int$. 
This increases inference strength in our simple program spaces:
Given $+$ and $\times$ are both possible at a node, we only have to break the symmetry once, not twice. 

Second, the if-then-else statement holds a different type of symmetry:
Here, we want the \textit{then} and \textit{else} branches of the statement to be different, but for any possible assignment. 
In other words, we want to match a wide range of possible programs and conditionally enforce a constraint. 
That is, we only care about the sub-program in the \textit{then} sub-branch if it matches the \textit{else} branch.

Formulating and enforcing these types of constraints for all possible programs is not easy.
Our language thus introduces a higher-order type of constraint, introducing variables to constraints that match a sub-tree within an AST.
First, this allows us to describe constraints over partial programs:
We can easily formulate and propagate constraints like \texttt{Forbid} $(1\times \Int)$ by using a variable \texttt{:A}.
Second, we can easily express the constraint on the if-then-else statement above, removing all programs that follow the shape \texttt{if Bool then :A else :A} with variable \texttt{:A}. 

\paragraph{A(n) (extendable) language for constraints.}
In the unit-test example, we want to forbid some shapes, like $1+\Int$, and enforce others, like ``the program must contain the input-symbol''.
We thus provide a range of standard operators for syntactic constraints we think are useful in practice. 
Beyond \texttt{Contains} and \texttt{Forbid}, i.e., a program must or must not contain this shape, many more operators are possible.
We express two more constraint operators: \texttt{Ordered}, used to break commutative symmetries, and \texttt{Unique}, expressing that the shape can occur at most once. 

This constraint language is not a standard language, like MiniZinc.
\solvername allows the formulation and description of additional constraints and their propagators.
Again, the notion of shape makes checking and propagating many constraints a lot easier.

\paragraph{Enforcing constraints on new operators.}
Assume that we have not found a solution program yet that passes all the unit tests.
We assume that this is due to the simplicity of the grammar. 
Thus, we add another operator to the grammar: \texttt{min(Int, Int)} and \texttt{max(Int, Int)}, which are, like the if-then-else statement, not given in the standard LIA formulation.
Similar to $\times $ and $+$, $min$ and $max$ are commutative, and $min(x, y)$ and $max(-x,-y)$ are the same. 
Though the concrete semantics of the operators are not defined in this SMT theory, we can easily come up with redundant programs and symmetries. 

\subsection{Using Syntactic Constraints to Guide the Search}
The second use of constraints we want to highlight is the modelling power they provide, beyond describing only feasible programs.

\paragraph{Defining useful programs.}
We are not restricted to formulating constraints that remove redundancy, but can describe constraints that are \textit{useful}.
For example, a \textit{useful} syntactic constraint is that every program should depend on the input. 
A program that does not contain the input is valid but likely useless.

Further in our example, we can directly see that the solution should be `simple' in nature.
Hence, we can choose to restrict the program space to simpler programs, introducing a \textit{language bias}.
For example, we can enforce that at most one if-then-else statement is contained in the target program, as we likely do not need that many conditionals. 
While both constraints do not remove \textit{redundant} programs, we can force the search to a useful sub-space.

\paragraph{Better search.}
Moreover, we argue that constraints are a useful tool for directing search.
Again, assume the case where we didn't find a solution and added the operators $min$ and $max$ to the grammar.
We ideally want to avoid re-exploring previously explored programs by ensuring that the new operators are present.

This resembles a crucial step in deploying program synthesizers: Finding the right domain-specific language (DSL), which involves adding and removing functionality from the grammar.
Our solver makes it possible to impose a constraint stating that \textit{every explored program should contain a new primitive}. 
This way, our solver \solvername only explores programs that have not been explored before.

\section{Background}
\label{sec:background}

To explain our contribution in detail, we start by introducing both program synthesis and the fundamentals of constraint programming.

\subsection{Program Synthesis}


Program synthesis (PS) is the task of deriving satisfying programs from (1) a specification, describing user intent, and (2) a program space, describing the space of possible programs. 
One way to express the user intent is through a set of input-output (IO) examples that should all be satisfied by the target program. 
The space of feasible programs is usually expressed by a context-free grammar, comprised of a set of derivation rules.
\begin{example}
    For the LIA domain, \Cref{fig:LIA_example} shows a specification by example and a context-free grammar (CFG) describing the program space.
\end{example}

Solving a program synthesis task is an enumerative search through the search space spanned by possible grammar rules. 
There exist multiple search strategies to find solution programs. 
We briefly introduce common terminology used for search.  

Here, a program is described by its \textit{abstract syntax tree} (AST). 
\textit{Top-down} search starts by taking the starting symbol of the grammar as the root.
Nodes corresponding to non-terminal symbols are called \textit{holes}.
Any program tree with one or more holes is called a \textit{partial program} and needs to be repeatedly expanded until there are no holes left, making it a \textit{complete program}.
Holes are filled using production rules, making it a concrete \textit{value node}.

\begin{example}
    The program 
    \begin{equation*}
        \texttt{if Bool then x+1 else Int}
    \end{equation*} 
    is partial, with two holes described by the non-terminals $\{\Bool,\Int\}$. 
\end{example}

Repeatedly applying derivation rules describes the \textit{search tree}, with program trees as nodes and connections between (partial) programs and their respective refinements. 
Note that the search tree is usually infinite and pruned by setting a maximum program depth and/or size.


\subsection{Constraint Programming}
\label{sec:CP}

Constraint Programming (CP) concerns 
formulating and solving Constraint Satisfaction Problems (CSP) or their optimisation variants \cite{CHIP-CP, MINI-CP}. 
Given a CSP, we try to find an assignment for a given set of variables that satisfies a given set of constraints. 
In the upcoming sections, we will review constraint programming techniques, so we can reapply them for the purpose of solving \textit{simple program spaces} in program synthesis.

\paragraph{Formulating CSPs}
A Constraint Satisfaction Problem consists of three components: a set of decision variables $X$, a set of their respective initial domains $D$, and a set of 
constraints $\mathcal{C}$.

A decision variable $x \in X$ is a variable that can take values in its corresponding domain $D(x) \in D$.  Whenever there is only 1 value a variable can take, that is $|D(x)| = 1$, we say a variable is \textit{fixed}.  A CSP is considered solved if all variables are fixed and all constraints are satisfied.

A constraint $c \in \mathcal{C}$ can restrict the feasible combination of variable values. 
For example, we may include a constraint $c = (x > y)$ to enforce that x must always be larger than y.
\begin{example}
    \label{ex:example_csp}
    \begin{equation*}
        \begin{aligned}
            X =& \,\{x, y, z\} \\
            D =& \,\{D(x), D(y), D(z)\} = \{\{1, 5, 7\}, \{2, 3, 5\}, \{-1, 1, 3, 5\}\} \\
            \mathcal{C} =& \,\{x \geq y, z = x - y\}
        \end{aligned}
    \end{equation*}
\end{example}
%
We can solve this example upon inspection and find a solution: $(x, y, z) = (5, 2, 3)$. 
Note that in this case, the solution is not unique. 

\paragraph{Solving CSPs}
Solving a given CSP consists of reasoning (propagation) and search (branching); some solvers also use clause learning and relaxations:
\begin{enumerate}
    \item \textit{Propagation:} Check for constraint violations and filter out impossible values from domains.
    \item \textit{Branching:} When no solutions can be filtered, split the problem into multiple sub-problems with smaller domains, such that any solution of the original problem can be found in the union of the sub-problems.
\end{enumerate}

We describe how constraint propagation is implemented in \texttt{Mini-CP} \cite{MINI-CP}, a CP solver for education purposes.
\textit{Propagation} is responsible for removing impossible values from domains according to the constraints. 
This process will continue until the constraints are unable to further reduce any of the domains. 
This is called a \textit{fixed point}: applying any propagation has no effect. 

To handle constraints, constraint-satisfaction solvers provide two functions:
\begin{enumerate}
    \item \textbf{Post constraint} is executed whenever the constraint is first imposed. 
    This is where the initial propagation of a constraint takes place.
    Variables, that occur in the constraint, add this constraint to their list of active constraints.
    \item \textbf{Propagate constraint} tries to shrink the domains of its related decision variables using a constraint type-specific filtering algorithm. Propagation gets triggered, e.g., on variable domain changes, and may trigger the propagation of other constraints.
\end{enumerate}

After executing the fix-point algorithm, we may not have found a solution yet. 
That is, there exists some variable $x \in X$ with $|D(x)| \geq 2$, and none of the constraint propagators can further reduce the domain size.
In that case, we \textit{branch} and divide the problem into two or more sub-problems such that all solutions of the original problem can be found in the union of the solutions of the sub-problems. Figure~\ref{fig:CPSimpleExample} demonstrates how a solver may branch on the domain of $x$. 

The repeated application of the branching scheme forms a tree structure.
The nodes of this tree are called states.
Exploring new states in the search tree is typically done using a depth-first traversal. 
This search strategy is memory efficient since only a single state needs to be maintained at a time \cite{MINI-CP}.  
Should a state (search node) be reached where one or more variables have an empty domain, then the solver backtracks to an earlier node in the search tree (e.g., the parent node), and chooses another sub-problem.  
Should the whole search tree be fathomed without finding a solution, then the problem is infeasible, having no solution.

\begin{figure}[tbh]
  \centering
  \hfill
  \begin{subfigure}{0.4\textwidth}
    \centering
    \includegraphics[width=\textwidth]{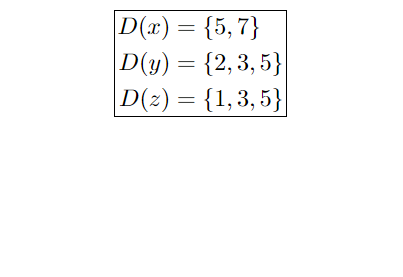}
    \caption{Fix-point in the root}
    \label{fig:CPSimpleExample2}
  \end{subfigure}
  \hfill
  \begin{subfigure}{0.4\textwidth}
    \centering
    \includegraphics[width=\textwidth]{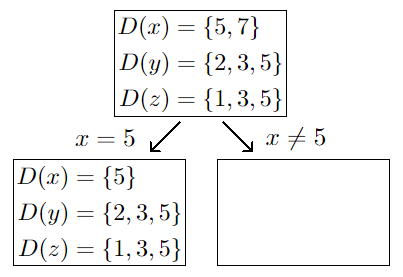}
    \caption{Branching}
    \label{fig:CPSimpleExample4}
  \end{subfigure}
  \hfill
  \caption{Solving the CSP from Example \ref{ex:example_csp} branching on the domain of $x$.
For the right branch, the constraint $x\neq5$ is posted, which triggers the propagation and the fix-point algorithm, leading to no feasible solution.}
  \label{fig:CPSimpleExample}
\end{figure}

\section{Problem Statement}
\label{sec:problem_statement}


In this section, we formally define the general problem of program synthesis and the specific problem of searching over a \textit{constrained program space}.
\begin{definition}[Program Synthesis]
    Given a context-free grammar $\G$ that specifies the syntax of the programming language and program space $\mathcal{P}_\G$, and a specification of user intent $\mathcal{S}:\mathcal{P}\rightarrow \{0,1\}$, \textbf{find} a program $p\in\mathcal{P}_\G$ derived from $\G$ such that $\mathcal{S}(p)=1$.
\end{definition}

We further define a constrained program space as:
\begin{definition}[Constrained Program Space]
    Given a context-free grammar $\G$, the original program space $\mathcal{P}_\G$ and a set of constraints $\mathcal{C}$, the constrained program space $\mathcal{P}_{\G,\mathcal{C}}$ is the set of all programs $p\in \mathcal{P}_\G$ such that $p$ satisfies all $c\in\mathcal{C}$.
\end{definition}

We call a (partial) program $p$ \textit{inconsistent} w.r.t. $\mathcal{C}$, if 
\begin{enumerate}
    \item $p$ violates any of the constraints $\mathcal{C}$, or
    \item all possible completions of $p$ violate any of the constraints $\mathcal{C}$.
\end{enumerate}

A context-free grammar (CFG) with rules dependent on their context in the form of constraints is called a context-sensitive grammar (CSG). 

\begin{definition}[Program Synthesis with Constraints]
    Given a \textit{context-sensitive} grammar $\G$ with constraints $\mathcal{C}$ and specification $\mathcal{S}:\mathcal{P}\rightarrow \{0,1\}$, \textbf{find} a program $p\in\mathcal{P}_{\G,\mathcal{C}}$ derived from $\G$ such that $\mathcal{S}(p)=1$.
\end{definition}

\section{Method Overview}
\label{sec:method_overview}



Given a program space and a set of constraints on that space, the main goal of our solver \solvername is to enumerate programs from the provided space such that no program violates the constraints.
The key conceptual idea 
is to iteratively and lazily construct \textit{simple program spaces} in which constraint propagation is efficient (\Cref{fig:method_overview}).
We formally define what a simple program space is later; intuitively, a simple program space consists of programs represented with an AST of the same shape and type.
In this context we use constraint solving and program enumeration as synonyms: we think of constraint problems as representations of program spaces such that every valid solution to the problem constitutes a program from that space.

Our synthesis framework effectively acts as a \textit{meta-solver} coordinating two solvers: a \textit{decomposition solver} formulating simple program spaces and a \textit{uniform solver} enumerating programs from the simple space.
The decomposition solver constructs problems and proactively propagates constraints that can be propagated early.
It also proactively detects when simple program spaces are invalid, i.e., all programs violate constraints.
The uniform solver iterates over all programs with the identical AST shape and makes sure that all returned programs satisfy constraints.  \solvername thus combines these two sub-solvers together with a search procedure, in order to achieve effective program synthesis.


\section{Modelling the Program Space (be)for(e) Synthesis}
\label{sec:method}
\label{sec:method:shaping}

In this section, we describe how to shape the search space. 
We first introduce a language to express constraints over ASTs, i.e., a language to constrain shapes.
For our set of basic constraints, we define the syntax and interpretation. 
Second, we formally introduce the notion of simple program spaces, that is, shapes in which constraints are easy to propagate and enforce.
Third, we intersect the two notions using \textit{local constraints}: constraints that hold for a certain simple shape.

\subsection{The Language of Constraints}
\label{sec:method:language}
We define a canonical language to describe syntactic constraints over ASTs, that is, the syntactic representation of a program.
Our constraint system is centred around the matching of (sub-)trees.
We start by introducing the different node types and defining matching.
Second, we introduce the different constraint operators, also called constraint types.

All constraints have the following shape.
\begin{definition}
Given a constraint operator $Op$ and a \textit{template tree} $t$ a constraint has the form
\begin{center}
	$Op$ $t$
\end{center}
\end{definition}
A template tree denotes all programs that fit this syntactic template.

The values of nodes in the template trees refer to the derivation rules in the grammar, used to derive a program.
To make sure the derivation rules are referred to unambiguously, we use the indices of the derivation rules as the values.
We use the grammar in Figure \ref{fig:LIA_grammar} to illustrate the constraints. 

The template trees are built from three types of nodes, which differ in the values they take.

\begin{definition}[Value node]
	A \textbf{value node} is a node with a value assigned to an index of a derivation rule from the given grammar $\mathcal{G}$.
	\label{def:valuenode}
\end{definition}

\begin{definition}[Domain node]
	A \textbf{domain node} is a node with a value assigned to \textit{a set} of indices from the given grammar $\mathcal{G}$.
	\label{def:domainnode}
\end{definition}

\begin{definition}[Variable node]
	A \textbf{variable node} is a node with a name assigned to it.
	\label{def:varnode}
\end{definition}

The domain node allows us to compactly represent a class of template trees.
The variable nodes allow us to unify/match a subtree and refer to it within a template tree.

\begin{example}
    We use curly braces to denote domain nodes and a starting colon to denote variable nodes. 
    The template tree in Figure \ref{fig:uniformsolver_input} contains domain node $\{+, \times\}$ and variable nodes $:a$.
\end{example}

\begin{figure}[t]
    \centering
    \begin{subfigure}{0.20\textwidth}
        \centering
        \begin{forest}
            [{\{$+$, $\times$\}}
                [$:a$]
                [$:a$]
            ]
        \end{forest}
        \caption{A template tree $t$}
        \label{fig:uniformsolver_input}
    \end{subfigure}
    \begin{subfigure}{0.38\textwidth}
        \centering
        \begin{forest}
            [,phantom
                [$+$
                    [$1$]
                    [$1$]
                ]
                [$\times$
                    [$+$
                        [$1$]
                        [$x$]
                    ]
                    [$+$
                        [$1$]
                        [$x$]
                    ]
                ]
            ]
        \end{forest}
        \caption{Concrete trees matching with $t$}
        \label{fig:uniformsolver_match}
    \end{subfigure}
    \begin{subfigure}{0.38\textwidth}
        \centering
        \begin{forest}
            [,phantom
                [$+$
                    [$1$]
                    [$x$]
                ]
            ]
        \end{forest}
        \caption{Concrete tree \textbf{not} matching with $t$}
        \label{fig:uniformsolver_nonmatch}
    \end{subfigure}
    \caption{Example for matches and non-matches for template tree $T$.
    The template tree in (a) enforces that both children are the same. Thus the tree in (c) does not match $t$.
    }
    \label{fig:uniformsolver_matches}
\end{figure}
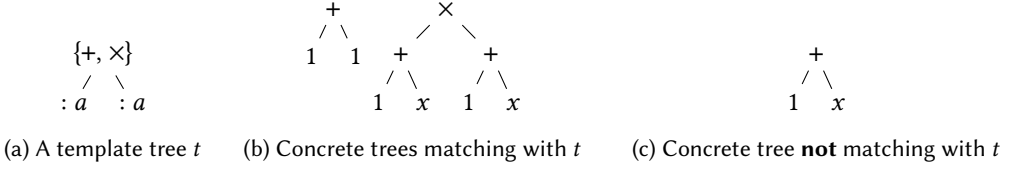

\begin{definition}
	Given a set of derivation rule indices $D$, the set of template trees $\mathcal{T}$ is recursively defined as:
	\begin{itemize}
		\item a value, domain, and variable nodes are template trees and in $\mathcal{T}$
		\item if children $c_1, \ldots, c_n$ are template trees, i.e. are in $\mathcal{T}$, then $(d, (c_1, \ldots, c_n))$ is also in $\mathcal{T}$ where $d \in D$ and $n$ is the arity of $d$
		\item nothing else is in $\mathcal{T}$. 
	\end{itemize}
\end{definition}

The constraints effectively forbid or require the presence of a template tree in the AST of programs.
We therefore define a matching operator between an AST and a template tree, which, in turn, depends on a matching operator between their nodes.
The constraint checking algorithms use the same matching operator to propagate the constraints. 

\begin{definition}[Node match]
    \label{def:node_match}
	A node $n_p$ from an AST matches a node $n_t$ from a template tree
	\begin{itemize}
		\item if $n_t$ is a value node, then $n_p$ and $n_t$ match if they have the same value
		\item if $n_t$ is a domain node, then $n_p$ and $n_t$ match if $n_p \in n_t$
		\item if $n_t$ is a variable node, then $n_p$ and $n_t$ match and, additionally, $n_t$ unifies with the tree rooted at $n_p$
	\end{itemize}
\end{definition}

\begin{definition}[Tree match]
    \label{def:tree_match}
	An AST $(a, (ac_1, ..., ac_n))$ matches a template tree $(t, (tc_1, ..., tc_m))$ iff $n = m$ and
	\begin{itemize}
		\item the root nodes $a$ and $t$ match, and 
		\item for all $i$ such that $1\leq i\leq n$, the subtree rooted at $ac_i$ matches the subtree rooted at $tc_i$.
	\end{itemize}
\end{definition}

\begin{example}
    The template tree $t$ in Figure \ref{fig:uniformsolver_input} matches with the trees in Figure \ref{fig:uniformsolver_match} but not with the tree in Figure \ref{fig:uniformsolver_nonmatch}.
\end{example}

Now that we have defined template trees and how to match them to ASTs, we introduce the set of constraint operators that can be applied to template trees.
The operators come from the following set of basic operators: \cstrname{Unique}, \cstrname{Contains}, \cstrname{Forbid}, \cstrname{Ordered}.
This is not an exhaustive list of all operators that might be useful for synthesis, but the ones that we have identified as useful for removing redundant programs. 
Our solver provides interfaces to easily add new operators.

We now define the constraints and their semantics.
First, the \texttt{Forbid} constraint can be used to forbid syntactically valid, yet semantically redundant sub-programs from the grammar. 
For example, the arithmetic grammar (see Figure~\ref{fig:LIA_example}) has a rule $Int := -Int$. 
A forbid constraint could be used to prevent $-(-(a))$ from appearing anywhere in the program tree. 
We do not lose the target program by eliminating such sub-programs, because $-(-(a))$ is represented by $a$ in another program.

\begin{definition}[Forbid]
	Given a program $p$ represented as AST and a template tree $t$, the program $p$ satisfies the constraint \texttt{Forbid} $t$ if there is no subtree $s$ of $p$ such that $s$ and $t$ match.
\end{definition}

The \texttt{Contains} constraint enforces that the provided template tree appears somewhere at or below the root. 
Within the arithmetic grammar, we could enforce that the program always contains $x$, actually depends on the input. 

\begin{definition}[Contains]
	Given a program $p$ represented as AST and a template tree $t$, the program $p$ satisfies the constraint \texttt{Contains} $t$ if there is a subtree $s$ of $p$ such that $s$ and $t$ match.
\end{definition}

The \texttt{Unique} constraint can be used to enforce that a certain grammar rule cannot appear more than once. 
In the LIA environment (see Figure \ref{fig:LIA_example}), this can be used to enforce that if-then-else is used only once to only search for simpler programs.

\begin{definition}[Unique]
	Given a program $p$ represented as AST and a template tree $t$, the program $p$ satisfies the constraint \texttt{Unique} $t$ if there is a subtree $s$ of $p$ such that $s$ and $t$ match and there is no other subtree $s^\prime$ of $p$, $s \not = s^\prime$, such that $s^\prime$ matches $t$.
\end{definition}

The \texttt{Ordered} constraint ensures that if the program tree contains the specified template tree, the matched variable nodes in the tree are ordered; by default, we use lexicographical order. 
This constraint is particularly useful for breaking commutative properties.
For example, in the arithmetic grammar, an ordered constraint can be used to ensure that only one of $a \times b$ and $b \times a$ is valid. 
Since they are semantically equivalent, we will not lose the target program by eliminating either one the two programs. 
The ordering we use is not important, as long as it is consistent throughout the search. 
We will use an ascending ordering in the rule index and tie break in a depth-first manner in case of equality. 

\begin{definition}[Ordered]
	Given a program $p$ represented as AST, a template tree $t$ with variable nodes $a,b,\dots$ and an ordering over trees $\leq_T$, the program $p$ satisfies the constraint \texttt{Ordered} $t$ with respect to $\leq_T$ if $t$ and $p$ match and the bound variable nodes in the tree follow $\leq_T$, i.e., $a\leq_Tb\leq_T\dots$. 
\end{definition}

By default, we set $\leq_T$ to be the lexicographical order over trees, but any total order comparing program trees is applicable.

\begin{definition}[Conjunction of Constraints]
    The program $p$ satisfies the \textit{set of constraints} $\mathcal{C}=\{Op_i\text{ }t_i\}$ if $p$ satisfies all $Op_i\text{ }t_i\in \mathcal{C}$.
\end{definition}


\begin{example}
    In the arithmetic grammar, the ordered constraint 
    \begin{equation*}
        \texttt{Ordered}\left( \vcenter{\hbox{\begin{forest}
            [{\{$+$, $\times$\}}
                [$:a$]
                [$:b$]
            ]
        \end{forest}}} \right)
    \end{equation*}
    can be used to ensure that only $a\times b$ and $a+b$ are explored, but $b\times a$ and $b+a$ are not.
\end{example}

\begin{example}
    The set of constraints 
    \begin{equation*}
        \left\{\texttt{Forbid}\left( \vcenter{\hbox{\begin{forest}
            [$\times$
                [$1$]
                [$:a$]
            ]
        \end{forest}}} \right),
        \texttt{Contains}\left( x \right)
        \right\}
    \end{equation*}

    allows program $p_1$ but eliminates program $p_2$ and $p_3$ of 
    \begin{equation*}
        p_1 = 
        \vcenter{\hbox{\begin{forest}
            [,phantom
                [$+$
                    [$1$]
                    [$\times$
                        [$x$]
                        [$x$]
                    ]
                ]
            ]
        \end{forest}}},
        p_2=
        \vcenter{\hbox{\begin{forest}
            [,phantom
                [$\times$
                    [$1$]
                    [$x$]
                ]
            ]
        \end{forest}}},
        p_3 = 
        \vcenter{\hbox{\begin{forest}
            [,phantom
                [$+$
                    [$1$]
                    [$1$]
                ]
            ]
        \end{forest}}}
    \end{equation*} 
\end{example}

\subsection{Simple Program Spaces}

So far, we have left the notion of a simple program space intuitive; we now precisely define it.

With simple program spaces, we want to represent a class of programs whose ASTs have the same shape.
Looking at it from a constraint-solving perspective, the nodes in the AST are the choice variables, where choices correspond to functions and arguments to be placed at the nodes, and no matter which choice we make at any of the nodes, the shape of the AST does not change.
If this is the case, constraint propagation can be done efficiently.

Consider how choices change the AST shape.
\begin{example}
    Consider the grammar from Figure \ref{fig:LIA_grammar}. 
    The addition rule $4$ has a different shape than the negation rule $3$, due to the different number of children.
    Rule $4$ has a different shape to the made-up rule 
    \begin{equation*}
        \Int := \texttt{negate\_on\_condition}(\text{Bool},\Int)
    \end{equation*}
    as the child's types are different.
    The addition rule $4$ has the same shape as the multiplication rule $5$.
    Hence, choosing between $4$ and $5$ leads to no changes in the shape of the AST.
\end{example}

We call the structure representing all programs with an identically shaped AST a \textit{uniform tree}.
A uniform tree can contain two types of nodes: value nodes and holes.
Value nodes are identical to \cref{def:valuenode}.

\begin{definition}[Hole]
    \label{def:hole}
	A \textbf{hole} is a node such that
	\begin{itemize}
		\item the value of a hole is a range of derivation indices of a grammar $\mathcal{G}$, and
		\item the children are holes or value nodes, or an empty set (indicating a leaf node).
	\end{itemize}
	\label{def:hole}
\end{definition}

Note that our notion of holes is different from the standard notion in the literature, where a hole is simply an unfilled part of an AST \cite{SKETCH,SyGuSAlur13,GulwaniPS17}.
In our interpretation, a hole is every node for which we did not make a final decision about its value.
Importantly, a hole can have children, which can also be holes or value nodes.

Now we introduce a special version of a hole: No matter the choice we make for the holes in a tree, the shape always stays the same.  
We call these \textit{uniform} holes.

\begin{definition}[Uniform holes]
    \label{def:uniform_hole}
	A \textbf{uniform hole} is a node such that
	\begin{itemize}
		\item its value is a range of derivation indices of a grammar $\mathcal{G}$,
            \item every value assignment from the range results in the same number of children
		\item its children are uniform holes or value nodes, and 
		\item each child has at most one type.
	\end{itemize}
	\label{def:uniformholes}
\end{definition}

The types of a hole are described by the types of its possible values. 
To keep its shape, we require holes to have at most one type, i.e., the same type of continuation.


Finally, we give a definition of a uniform tree.

\begin{definition}[Uniform tree]
    \label{def:uniform_tree}
	A uniform tree is a tree such that
	\begin{itemize}
		\item The root is either a value node or a uniform hole, and
		\item All children of the root are uniform trees.
	\end{itemize}
	\label{def:uniformtree}
\end{definition}

\begin{example}
    \label{ex:uniform}
    Consider
    \begin{equation*}
        p_1=\vcenter{\hbox{\begin{forest}
            [{\{$+$, $\times$\}}
                [{\{$1,x$\}}]
                [$-$
                    [{\{$ 1,x$\}}]
                ]
            ]
        \end{forest}}}, 
        p_2=\vcenter{\hbox{\begin{forest}
            [{\{$+$, $\times$\}}
                [{\{$+,x$\}}]
                [$-$
                    [{\{$ 1,x$\}}]
                ]
            ]
        \end{forest}}}
    \end{equation*}
    $p_1$ is uniform. $p_2$ is not uniform due to the hole on the left not being uniform, allowing for different shapes.
\end{example}

The uniform tree now represents a simple program space.
The goal for our solver is now to enumerate this program space and ensure that every program is valid, given constraints.

To accommodate that AST nodes are now a range of possible values, we update the definition of the matching function from \Cref{def:node_match}:

\begin{definition}[Node match for uniform trees]
    \label{def:node_match_uniform_trees}
	A node $n_p$ from a uniform tree, given by its domain, matches a node $n_t$ from a template tree
	\begin{itemize}
		\item if $n_t$ is a value node, then $n_p$ and $n_t$ match if $|n_p|=1$ and $n_t\in n_p$
		\item if $n_t$ is a domain node, then $n_p$ and $n_t$ match if $n_p \subseteq n_t$ 
		\item if $n_t$ is a variable node, then $n_p$ and $n_t$ match and, additionally, $n_t$ unifies with the tree rooted at $n_p$
	\end{itemize}
\end{definition}

The tree match for uniform trees remains the same.

\subsection{Local Constraints}
The constraints we introduced so far, such as \cstrname{Forbid} $t$, hold for programs \textit{globally}.
This is easy to check once a program has been generated.
However, when we actively try to propagate constraints, the solver might need to track several \textit{instances} of the same global constraint at different parts of a uniform tree.
We call such constraints \textit{local constraints}.

\begin{definition}[Local constraints]
	The \textbf{local constraint} of a constraint $c$ is a rooted version of $c$. 
	That is, local constraint holds a \verb|path| pointing to a location in the tree where $c$ applies.
	\label{def:localconstraint}
\end{definition}

\begin{example}
    \label{ex:local_constraints}
    Consider the uniform tree 
    \begin{equation*}
        p_1=\vcenter{\hbox{\begin{forest}
            [{\{$+$, $\times$\}}
                [{\{$1,x$\}}]
                [{\{$+$, $\times$\}}
                    [{\{$ 1,x$\}}]
                    [{\{$ 1,x$\}}]
                ]
            ]
        \end{forest}}}
    \end{equation*}
    The local constraint \texttt{Forbid} $1\times\Int$ rooted at path $[2]$, i.e., the right child of the root node, only propagates when any changes are made to the right sub-tree.
\end{example}

\section{Efficiently Propagating Syntactic Constraints in Program Spaces}
\label{sec:method:solving}

Having introduced the constraint language, program spaces, and local constraints, we are now ready to introduce our solver \solvername.
The design is comprised of two parts, the
\begin{itemize}
    \item \textbf{decomposition solver}, decomposing the program space into simple sub-spaces, and the 
    \item \textbf{uniform solver}, that propagates and iterates simple, or \textit{uniform}, program spaces.
\end{itemize}

We describe how to use this architecture to effectively propagate syntactic constraints. 
Subsequently, we introduce the specific propagation procedures for the constraints introduced in Section~\ref{sec:method:language} and prove their correctness.  Together with propagation, \solvername uses search, which is described in Section~\ref{sec:search}.

\subsection{Solver Architecture}


\begin{figure}[t!]
    \centering
    \includegraphics[width=0.98\textwidth]{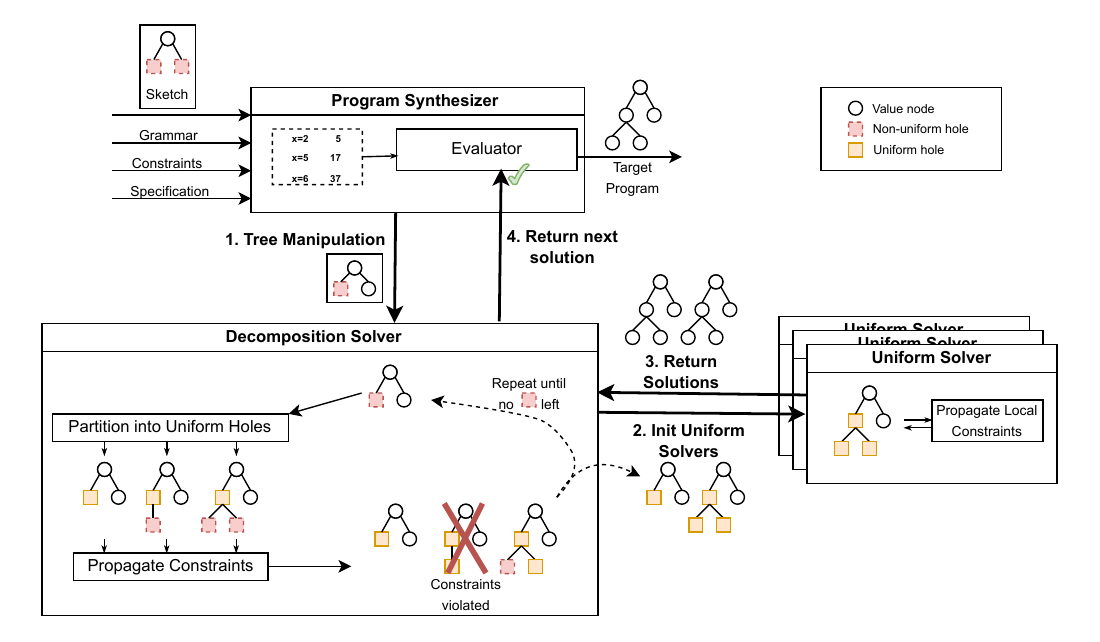}
    \caption{Overview of constrained program synthesis using \solvername. 
    (1) The synthesizer operates on the decomposition solver using tree manipulations.
    (2) The solver then propagates these changes and tries to iteratively construct uniform trees, i.e., simple program spaces.
    (3) For each uniform tree, a uniform solver is initialized, which propagates constraints and iterates solutions.
    (4) The decomposition solver forwards the solutions back to the synthesizer, where the solution is tested against the specification. 
    If satisfied, the target program is returned.}
    \label{fig:method_overview}
\end{figure}

\subsubsection{Decomposition solver}
\label{sec:decomposition_solver}

The \textit{decomposition solver} has a range of responsibilities.
Its primary task is to break down the program space into simple program spaces, i.e., uniform trees.
The second task is to maintain a valid state with which the search method interacts.
Note that the decomposition solver is not responsible for iterating over the program space; it only reacts to the command from a search procedure.
Thus, it is responsible for posting and propagating local constraints during the search.

We first describe the two decomposition algorithms: 1. how to partition a hole, and 2. how to decompose constraints into more powerful, \textit{local constraints}.
Second, we describe how to keep track of validity using the \textit{solver state}.
Third, we describe the interface with the solver state, i.e., the current program, limited to just three kinds of \textit{tree manipulations}.

\paragraph{Decomposing program spaces}
Given a hole (see Definition \ref{def:hole}), the solver tries to simplify its program space.
When called, the solver takes the hole and partitions it into the possible shapes it can derive to, i.e., uniform holes.
Once a tree is uniform, i.e., all holes are uniform, a uniform solver is initialized to solve the simplified problem, or simple program space.

\paragraph{Decomposing constraints}
Every \textit{global constraint} has a corresponding \textit{local constraint}.
The decomposition solver triggers whenever a new node (usually a hole) appears in the program tree. 
The solver will then post a local variant of the global constraint, propagating the constraint at that particular location.

Splitting up global constraints into local constraints has two main advantages:
\begin{enumerate}
    \item Local constraints can be temporarily deactivated. 
    By deactivating satisfied local constraints, we prevent checking these satisfied parts of a global constraint over and over again.
    \item We can reduce the frequency of unnecessary propagation. 
    On each tree manipulation, we can carefully choose for each active local constraint to either schedule it for propagation or ignore it.
    Thus, we do not need to check active constraints local to other branches in the program tree.
\end{enumerate}

To illustrate how global constraints are split up into local constraints, consider the grammar constraint \texttt{Forbid} ($1 \times a$) represented as a tree in Figure \ref{fig:grammarvslocalconstraints1}.
All nodes in this tree are new; thus, the solver posts a local constraint at each of the nodes in the tree. 
In the figure, each * represents a local variant of the forbid constraint at a particular node. 
During the fix point algorithm, all newly posted constraints are propagated to their respective location. 
Figure \ref{fig:grammarvslocalconstraints3} represents the state after propagation. 
3 out of the 5 local constraints are now satisfied and deleted. 
The other 2 constraints cannot deduce anything at this point and remain \textit{active}. 
This means that if a tree manipulation occurs at or below their path, they are scheduled for (re-)propagation.
\begin{figure}[tb]
    \centering
    \begin{subfigure}{0.22\textwidth}
        \centering
        \begin{forest}
        [$\times$
            [$1$]
            [$a$, fill=green!30,draw,diamond]
        ]
        \end{forest}
        \caption{Forbidden Tree. Variable node $a$ matches any sub-tree.}
        \label{fig:grammarvslocalconstraints1}
    \end{subfigure}
    \begin{subfigure}{0.38\textwidth}
        \centering
        \begin{forest}
        [{\{$+$, $\times$\}}, uhole, label={[label distance=-10pt]north east:*}
            [{\{$1$, $x$, $-$\}}, hole, label={[label distance=-10pt]north east:*}]
            [{\{$+$, $\times$\}}, uhole, label={[label distance=-10pt]north east:*}
                [$1$, label={[label distance=-10pt]north east:*}]
                [{\{$1$, $+$, $\times$\}}, hole, label={[label distance=-10pt]north east:*}]
            ]
        ]
        \end{forest}
        \caption{Local constraints are posted}
        \label{fig:grammarvslocalconstraints2}
    \end{subfigure}
    \begin{subfigure}{0.38\textwidth}
        \centering
        \begin{forest}
        [{\{$+$, $\times$\}}, uhole, label={[label distance=-10pt]north east:*}
            [{\{$1$, $x$, $-$\}}, hole]
            [$+$
                [$1$]
                [{\{$1$, $+$, $\times$\}}, hole, label={[label distance=-10pt]north east:*}]
            ]
        ]
        \end{forest}
        \caption{After propagation}
        \label{fig:grammarvslocalconstraints3}
    \end{subfigure}
    \caption{Forbid constraint (a) is imposed on tree (b) by posting a local constraint (*) at each location. 
    After propagating, one hole was filled. Only 2 of the local constraints remain active. The other 3 constraints are satisfied and thus deleted.
    Here, borders of uniform holes are orange, non-uniform holes are red and dashed, and variable nodes are green and diamond-shaped.
    }
    \label{fig:grammarvslocalconstraints}
\end{figure}
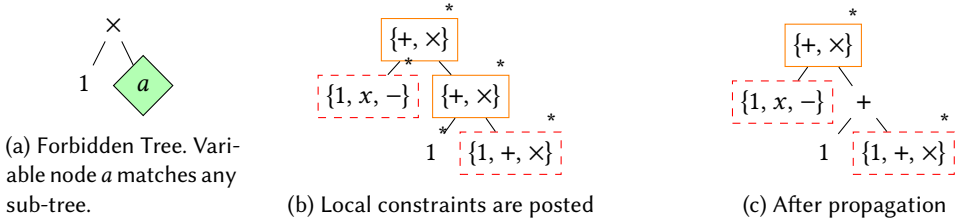

\paragraph{Solver state} 
A \textit{solver state} is a 3-tuple that fully describes the state of the solver.
It holds a (1) a partial program, (2) the list of \textit{active} constraints that could still be violated, and (3) a feasibility flag that indicates if the program still satisfies the constraints.
Note that the activation of local constraints is independent of any other solver state.
Local constraints are only valid for the current uniform tree and, thus, the current state.

After each tree manipulation, the solver will choose which constraints to schedule for propagation to ensure that the state satisfies all constraints. 
The constraints that need to be propagated depend on both the location of the hole and the type of constraint. 
On each tree manipulation, we will only schedule constraints that can potentially make any deductions.

\paragraph{Tree manipulations}
To be able to propagate constraints regardless of the type of search used, we make sure that all manipulations of the current state are made \textit{through the solver.}
Therefore, all search strategies manipulate the current state using a combination of primitive tree manipulations.

\begin{itemize}
    \item \texttt{remove!(solver, path, rule)}. Remove a rule from the domain of a hole at the given path. 
    If the remaining rules have the same shape, this hole will be converted to a uniform hole, and its children will be instantiated. 
    \item \texttt{remove\_all\_but!(solver, path, rules)}. Remove all rules from the domain of the hole at the given path, except for the specified remaining rules. If possible, the hole will be converted to a uniform hole.
    \item \texttt{substitute!(solver, path, new\_node)}. Substitute an existing node at the given path with a new node.
\end{itemize}
The solver is responsible for propagating relevant constraints after each of these tree manipulations.


\subsubsection{Uniform Solver}
The \textit{uniform solver} is for propagating constraints in a uniform tree and enumerating all valid programs.
Note that no new local constraints are posted, but only existing ones are propagated.

We describe our custom implementation of a uniform solver, supporting our custom constraint propagators.
Due to the nature of the simple program space, any existing solver can be used in its place.
Specifically, we implement an ASP version for our experiments.

The uniform solver closely follows the design of existing constraint solvers.
It performs a depth-first search over the assignments to the \textit{uniform holes} of the provided uniform tree. 
The solver uses a hole selection heuristic to pick a uniform hole, saves its state, and chooses a temporary value from the domain of the uniform hole.
Assigning a temporary value to one uniform hole triggers the fix-point constraint propagation algorithm.
This removes assignments from other holes that would result in an infeasible program.
The uniform solver then picks the next uniform hole until no more choices are left. 
This yields a complete program, and the solver backtracks to the last saved state, i.e., the last chosen uniform hole, and picks another value from its domain.
If no more values are left, then it backtracks to another uniform hole.
This exhaustively iterates all possible solutions of the uniform tree.

The program enumeration becomes memory-efficient by tracking changes since a saved state and reverting these changes to restore it. 
We use \textit{state sparse sets} \cite{MINI-CP} to describe domains, which allows us to easily revert to the previous state.



\subsection{Propagating Constraints}

In this section, we describe concretely how the constraints introduced in Section \ref{sec:method:language} are implemented and propagated. 
Note that we do not check constraints on complete programs, which is straightforward to do, but wasteful.
Instead, we aim to propagate the constraints to remove invalid programs before they are generated.
We provide pseudo-code in the supplementary material. 

Our propagators closely follow the design principles of constraint programming, which we tailor to our tree-shaped domain, i.e., uniform trees.
A core concept for efficient propagation is the notion of \textit{triggers}, which start a propagation and which we seek to minimize.
Here, the solver keeps track of triggers, i.e., the nodes relevant for a specific local constraint.
A local constraint is thus only propagated when a relevant node is updated.
Further, we want to minimize the number of constraints propagated by using the deactivation mechanism of local constraints.
Second, to maximize the inference strength of propagators, we aim to `localize' constraints as much as possible.
Thus, we want (1) constraints only to hold at a specific location, given by a path, and (2) to minimize the dependencies between sub-trees, as these dependencies must also be propagated.

\subsubsection{Forbid}
\label{sec:forbiddenconstraint}

Propagating a \cstrname{Forbid} constraint is straightforward with our notion of local constraints. 
First, the decomposition solver posts local versions of the constraint at all uniform holes that have the same shape, and thus could match the tree.

To propagate a local forbid constraint, given by a template tree and a path, the solver uses the pattern match function to match the forbidden tree with the node located at the path. 
The match describes whether the forbidden tree is in the set of programs, represented by the uniform tree. 
If the match fails, the local constraint is already satisfied and can be deactivated. 
If the match is successful, the forbidden tree is present in the program, so the state must be set to infeasible. 
If a match can be prevented by removing a rule from a hole, the constraint does so and then deactivates itself. 
It is also possible that multiple holes are involved, and no deduction can be made. 
In that case, the constraint remains active and will be re-propagated whenever one of the holes involved is updated.

\subsubsection{Contains}
\label{sec:containsconstraint}
The \texttt{Contains} constraint enforces that the program must contain a template tree.
Unfortunately, propagation is very expensive, as it requires pattern matching all nodes in the program tree with the template. 
However, in a uniform tree, we can deduce which nodes can match the \textit{shape} of the template tree, having to match type and children.
Thus, we only keep track of all potential matches, called \textit{candidates}.

Hence, we only need to match and update the candidates.
If a candidate matches the template, the constraint is satisfied and deactivated. 
If a candidate fails to match the template, it will be removed as a candidate. 
Finally, if there is only a single candidate remaining, it will be enforced to equal the template, and the constraint can be deactivated.

\subsubsection{Unique}
\label{sec:uniqueconstraint}
The \texttt{Unique} constraint is related to the \texttt{Contains} constraint, but enforces a derivation rule to appear at most once. 
Again, we only need to consider a list of candidates, i.e., nodes that match the shape.
When the local version is posted and propagated, we count the number of matches over candidates within the sub-tree.
If that number is $>2$, the current state is inconsistent. 
If that count is $1$, we post a local \texttt{Forbid} constraint to prevent it from occurring in any other branch. 
If the unique template tree only consists of a single node, we can directly remove that rule from any other domain.

\subsubsection{Ordered}
\label{sec:orderedconstraint}

An \texttt{Ordered} constraint is defined by a template tree with variable nodes and an order of these variable nodes.
Again, the decomposition solver posts a local \texttt{Ordered} constraint, rooted at a path, when the template tree matches the uniform tree.
This binds all variable nodes to node instances in the tree. 
The matched nodes are the triggers for this constraint.

Assume that we want to enforce $op(a, b)$ for variable nodes $a,b$ and order $a\leq b$ at a given path.
The constraint is enforced by iteratively comparing the rule indices of the nodes.
In case of equality, the tie is broken by comparing the children in a depth-first manner.

After the deductions, a \texttt{result} flag is returned that describes the current state of the $\leq$ inequality.
There are three possible results: 

\begin{itemize}
    \item Success. 
    $a \leq b$ is guaranteed under all possible assignments of the holes involved. 
    This means that the constraint is always satisfied and can be deactivated.
    \item Hard Fail. 
    $a > b$ is guaranteed under all possible assignments of the holes involved. 
    In this case, the constraint is violated, so the solver state must be set to infeasible.
    \item Soft Fail. 
    In this case, $a \leq b$ and $a > b$ are still possible depending on how the holes involved are filled. 
    We cannot make a deduction at this point, and the constraint needs to be re-propagated if one of the trigger nodes involved is updated.
\end{itemize}

Longer order chains, e.g., enforcing $op(a,b,c)$ with order $a\leq b\leq c$ can be reduced to two separate \texttt{Ordered} constraints, with $a\leq b$ and $b\leq c$.

\subsubsection{Conjunction of Constraints}
\label{sec:conjunction_of_constraints}
As \solvername is invariant to the order in which constraints are propagated (see Proofs in Section \ref{sec:methods:proofs}), the conjunction of constraints is propagated by propagating each conjunct individually.




\subsection{Proving Correctness}
\label{sec:methods:proofs}
\setcounter{theorem}{0}

We aim to prove the correctness of our constraint system. 
Specifically, we first show that our shape-based constraint formulation indeed captures and removes all unwanted solutions.
Second, we will use this result to prove the correctness of our propagators.

For this proof, we build upon two core properties of our constraint system.
First, our constraint system is itself centred around the matching function for shapes.
If a program matches the shape or potentially completes a match, the constraint is triggered. 
Second, within a simple program space, i.e., uniform trees, our constraint system is monotonic, that is, it only removes but does not add solutions to domains. 

Given a template tree $t\in\templatetrees$, a matching function $m:\templatetrees\rightarrow 2^\programs$ defines the subset of programs to which it matches.
We show the correctness of our recursive definition of the matching function (see \Cref{def:tree_match} and \Cref{def:node_match_uniform_trees}) and that the function $m$ matches exactly the programs we want.
Note that the programs $\programs$ do not need to be complete, but can have holes (see \Cref{def:hole}).
Further, we do not assume the programs $\programs$ to be uniform.

\begin{lemma}[Correctness of the Matching Function]
    \label{lemma:matching_function}
    Given a template tree $t$, a matching function $m:\templatetrees\rightarrow 2^\programs$, and a set of programs $\programs$:
    A program $p\in\programs$ matches template $t$ iff $p\in m(t)$.
\end{lemma}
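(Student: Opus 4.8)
The plan is to prove the biconditional by structural induction on the template tree $t$, exploiting the recursive structure of both the matching definition (\Cref{def:tree_match}) and the node-match definition (\Cref{def:node_match_uniform_trees}). The statement is essentially a consistency/soundness claim: the recursively-defined relation "$p$ matches $t$" (which is what $m$ is built from) coincides with the semantic notion of a program fitting the syntactic template. Since $m$ is \emph{defined} by the match relation, the lemma is really asking us to confirm that the recursive definition is well-formed and that the intended reading — "$p$ fits the template pattern described by $t$" — is exactly what the recursion computes. So the proof is an unfolding argument: I would fix the semantic meaning of "fits template $t$" as the natural one (a value node is fit by exactly the matching rule, a domain node by any rule in its set, a variable node by any subtree with the unification bookkeeping recorded, and an interior node by anything whose root matches and whose corresponding children recursively fit), and then show the recursive $m$ agrees.

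First I would handle the base cases: $t$ is a single value node, a single domain node, or a single variable node. For each, I invoke \Cref{def:node_match_uniform_trees} directly — e.g. for a value node, $p$ matches $t$ iff $p$'s root domain is a singleton containing $t$'s rule index, which is precisely the intended semantics of "$p$ is (only) that rule", and hence $p \in m(t)$; similarly for domain nodes ($n_p \subseteq n_t$) and variable nodes (always a match, with the unification side-condition recorded, not restricting membership). Then for the inductive step I take $t = (t, (tc_1,\dots,tc_m))$ with the $tc_i$ template trees for which the lemma holds by the induction hypothesis. By \Cref{def:tree_match}, $p = (a,(ac_1,\dots,ac_n))$ matches $t$ iff $n = m$, the roots match per \Cref{def:node_match_uniform_trees}, and each subtree at $ac_i$ matches the subtree at $tc_i$. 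By the IH, the latter is equivalent to $ac_i \in m(tc_i)$ for every $i$, and combining these equivalences gives $p \in m(t)$, since $m(t)$ is by construction exactly the set of programs of matching arity whose root matches $t$'s root and whose $i$-th child lies in $m(tc_i)$. This closes the induction.

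The main obstacle I expect is not the structural recursion itself but pinning down the treatment of \emph{variable nodes} cleanly, because a variable node both (a) always "matches" at the node level and (b) imposes a global coherence constraint — every occurrence of \texttt{:a} in $t$ must unify with the same subtree of $p$. If variable-node unification is handled purely as a side effect during matching (as the prose suggests), then I must be careful that the lemma's statement is about $m$ as the set of programs \emph{admitting} a consistent unification, and argue that the recursive traversal accumulates the unification substitution soundly and completely — i.e., a consistent global assignment exists iff the recursion succeeds. I would address this either by strengthening the induction hypothesis to carry the partial unification environment (proving: for every environment $\sigma$, $p$ matches $t$ extending $\sigma$ iff $p \in m_\sigma(t)$), or by appealing to the fact established implicitly in the definitions that the match relation is exactly the fixpoint of these rules. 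A secondary, much smaller point is to note explicitly that the lemma does not require $p$ to be complete or uniform, so the argument must only use properties of \Cref{def:hole} and the domain-based node match, never assuming singleton domains — which the induction above respects since it never case-splits on completeness.

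Given that the lemma is stated right before the excerpt ends, I also anticipate the authors use it as the workhorse for the subsequent propagator-correctness proofs (soundness = "never remove a program that matches nothing forbidden"), so I would phrase the conclusion to make the two directions separately citable: \emph{soundness} ($p \in m(t) \Rightarrow p$ matches $t$, used to justify that removing matched programs is safe) and \emph{completeness} ($p$ matches $t \Rightarrow p \in m(t)$, used to justify that the propagators catch every violation).
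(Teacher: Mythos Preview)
Your proposal is correct and follows essentially the same approach as the paper: structural induction on the template tree with a case analysis on the root node type (value, domain, variable), then appealing to the inductive hypothesis on children via \Cref{def:tree_match}. If anything, your treatment is more careful than the paper's own proof---the paper handles the variable-node case in two sentences without explicitly threading a unification environment, whereas you correctly identify the global coherence of repeated variable names as the only non-trivial point and propose strengthening the IH to carry a partial substitution $\sigma$, which is the cleaner way to make the argument fully rigorous.
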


\begin{proof}
    Let $t$ be a template rooted at node $r_t$, and $p$ be a program rooted at node $r_p$.
    We prove by structural induction, i.e., $r_p$ matches the template $r_t$ iff $r_p\in m(r_t)$.
    
    \textit{Case 1: $r_t$ is a value node.}
    $r_p\in m(r_t)$ if and only if $r_t = r_p$ by definition. 
    Then $r_p$ matches the template $r_t$.
    
    \textit{Case 2: $r_t$ is a domain node.}
    We prove by case for the domain compatibility of $r_p$ and $r_t$. 
    If $r_p$ and $r_t$ are disjoint, then $r_p\notin m(r_t)$ and $r_p$ does not match the template. 
    If $r_p \subseteq r_t$, then $r_p\in m(r_t)$, and all possible values match the template.
    Otherwise, there are elements in $r_p$ that are not in $r_t$, i.e., $r_p - r_t \neq \varnothing$. 
    Then by definition $r_p\notin m(r_t)$ and we need to remove elements from $r_p$ for a match.
    Thus, $r_p$ does not match the template.
    
    \textit{Case 3: $r_t$ is a variable node.}
    If the variable name of $r_t$ is not assigned, we assign $r_p$ to it.
    If the variable name $v$ of $r_t$ is assigned to a tree $t_v$, then we try to match $t_v$ with $r_p$.
    If $r_p\in m(t_v)$, then $r_p$ matches the template. 
    
    By the inductive hypothesis, if all children match recursively, then $p$ matches $t$ at the root as well. 
    
    Thus, $p \in m(t)$ if and only if $p$ matches the template $t$.
\end{proof}



Second, we prove the soundness of our constraint system with respect to a matching function.
Here, we call a solution to the constraint problem, i.e., a program, \emph{invalid} if it violates any of the constraints.

\begin{lemma}[Correctness of all Singular Constraints]
    \label{lemma:single_constraints}
    Given a constraint $C = Op\text{ }t$ for any
    $$
    Op\in\{Forbid, Contains, Unique, Ordered\},
    $$ 
    a template tree $t$, and a program $p\in\programs$:
    $p\in \programs$ is removed by the solver iff $p$ violates $C$.
\end{lemma}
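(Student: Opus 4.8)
The plan is to prove both directions of the ``iff'' by a case analysis on the operator $Op$, in each case reducing the claim to the correctness of the matching function (\Cref{lemma:matching_function}) together with two structural facts about \solvername. First, the decomposition solver posts a local version of $C$ rooted at \emph{every} node whose shape is compatible with $t$ (it triggers on each newly appearing hole, see \Cref{sec:decomposition_solver}), so that ``$p$ satisfies $C$ globally'' is equivalent to ``$p$ satisfies the conjunction of all these rooted local constraints.'' Second, inside a uniform tree each propagator is monotone: it only ever prunes rules from hole domains or raises the infeasibility flag, never adds assignments; hence the fixpoint is well-defined and the set of enumerated programs is exactly the set of completions of the uniform tree that survive all prunings and never hit infeasibility. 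With these in hand, ``$p$ is removed by the solver'' means ``some local instance of $C$ either pruned a rule on the path to $p$ or raised infeasibility on the uniform tree containing $p$,'' and it suffices to analyse one rooted propagator at a time.

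Then I would do the case analysis. For \cstrname{Forbid}: by \Cref{lemma:matching_function} a complete $p$ violates $C$ iff some subtree $s$ of $p$ matches $t$. If the solver removes $p$, a local \cstrname{Forbid} rooted at a path $\pi$ either raised infeasibility — which, reading off \Cref{def:node_match_uniform_trees}, happens only when the (partially instantiated) subtree at $\pi$ matches $t$ under \emph{every} completion, in particular under the one yielding $p$ — or pruned a rule that $p$'s subtree at $\pi$ uses, and such a pruning is performed only when it is the unique way to avoid a match, so again $p$'s subtree at $\pi$ matches $t$. Either way $p$ has a matching subtree, hence violates $C$. Conversely, if $p$ violates $C$ there is a subtree $s$ with $s \in m(t)$; by the first structural fact a local \cstrname{Forbid} was posted at that path, the matching node is among its triggers, and when it re-propagates on the fully refined subtree it detects the (now unavoidable, along this branch) match and removes $p$. \cstrname{Contains} is the dual: $p$ violates $C$ iff \emph{no} subtree of $p$ matches $t$; the candidate list is, by the shape argument, a superset of every node that could still match, a node is dropped from it exactly when \Cref{lemma:matching_function} certifies it can never match, and when a single candidate remains it is forced equal to $t$ — so $p$ is removed precisely when every candidate is eliminated, i.e.\ when no subtree matches. \cstrname{Unique} combines the two: the propagator counts matching subtrees over candidates, raises infeasibility once the count exceeds one, and once the count reaches one posts a local \cstrname{Forbid}; by the \cstrname{Forbid} and \cstrname{Contains} analyses this removes $p$ iff the number of matching subtrees is not exactly one, which is the negation of the \cstrname{Unique} condition (here I would flag the ``at most once'' vs.\ ``exactly once'' reading of \Cref{def:localconstraint}'s parent definition and state precisely which one the propagator realises).

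The \cstrname{Ordered} case is where I expect the real work to be. After a local \cstrname{Ordered} is posted at a path where $t$ matches, its variable nodes are bound to concrete subtrees, and $p$ violates this instance iff in $p$ the tuple of bound subtrees is not $\leq_T$-increasing. I would show the three propagator outcomes correspond exactly to: \emph{Success} — the tuple is $\leq_T$-increasing under every completion, so no removal is warranted; \emph{Hard Fail} — it is $\leq_T$-decreasing under every completion, in particular for $p$, so $p$ must be removed; \emph{Soft Fail} — completions disagree, so no removal yet, and since the trigger nodes are scheduled for re-propagation and $p$ is a \emph{complete} program, the constraint must eventually reach Success or Hard Fail along the branch leading to $p$. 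The substantive point is that the depth-first rule-index comparison correctly decides $\leq_T$ on partially instantiated trees: once a prefix of compared rule indices strictly differs, the comparison is settled regardless of any still-open holes (Success or Hard Fail), whereas agreement so far with some involved hole still open is precisely Soft Fail; this uses that $\leq_T$ is the lexicographic-with-depth-first tie-break order, hence total. I would also argue that reducing a chain $a \leq_T b \leq_T c \leq_T \cdots$ to pairwise \cstrname{Ordered} constraints preserves semantics, again by totality/transitivity of $\leq_T$. Assembling the four cases with the first structural fact — a global constraint is the conjunction of its local instances, and the solver removes $p$ as soon as one instance does — gives ``removed by the solver'' $\iff$ ``violates $C$'' for every $Op$ in the list. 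The main obstacles are the \cstrname{Ordered} soft/hard-fail correspondence over partially determined uniform holes, and pinning down rigorously the global-to-local equivalence from the decomposition-solver description.
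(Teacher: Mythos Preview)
Your proposal is correct and follows essentially the same route as the paper: a case analysis over the four operators that reduces each to the correctness of the matching function (\Cref{lemma:matching_function}) together with the triggering/candidate mechanism. Your write-up is considerably more detailed than the paper's own proof --- in particular, your explicit global-to-local decomposition argument and the three-outcome analysis for \cstrname{Ordered} spell out steps that the paper compresses into single sentences --- and your flag on the ``at most once'' versus ``exactly once'' reading of \cstrname{Unique} is well taken: the paper's proof quietly adopts the $n\leq 1$ interpretation rather than the ``exactly one'' reading suggested by the formal definition.
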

\begin{proof}
    We proceed by cases over the types of constraints.
    By \Cref{lemma:matching_function}, we can use $m(t)$ to express the matches we want.
    
    \textit{Case 1: $C = Forbid\text{ }t$:}
    $C$ is violated if and only if there exists a sub-tree $p_i$ of $p$ that matches $t$, i.e., $p_i\in m(t)$.
    $p_i$ matches iff the constraint is triggered at a possible occurrence.
    Thus, $p$ is removed. 
    
    \textit{Case 2: $C = Op\text{ }t$ with $Op\in\{Contains, Unique\}$:}
    $C$ is satisfied if and only if there exist $n$ sub-trees $p_i$ of $p$ that matches $t$, i.e. $p_i\in m(t)$, where $n\geq 1$ for $Contains$ and $n\leq 1$ for $Unique$.
    As the shape has to match, any match must be at a candidate location (see \Cref{sec:containsconstraint}).
    The constraint is triggered if and only if a match happens at one of the candidate sites and the number of matches changes.
    $p$ is removed if and only if $n$ is violated.

    \textit{Case 3: $C = Ordered\text{ }t$, $t$ contains variables $a, b$ and enforces $a\leq_t b$ for some tree ordering $\leq_t$:}
    Assume that $a,b$ were matched to sub-trees $t_a, t_b$.
    By definition, $C$ is violated if and only if $t_a \nleq_t t_b$, i.e., any possible complete program within $t_a, t_b$ violates the constraint.
    $p$ is removed if and only if $t_a \nleq_t t_b$.
    The $Ordered$ constraint is triggered at initialization of the trees, and if $t_a, t_b$ change through any tree manipulation.
    Thus, at all times, $p$ is removed iff $p$ violates $C$.
\end{proof}

This proves the arc-consistency~\cite{HandbookOFCP} of the individual constraints.
Together, the lemmas let us prove our main theorem, proving arc-consistency for a set of constraints:

\begin{theorem}[Correctness Theorem]
    \label{theorem:correctness}
    Given a conjunction of constraints $\mathbf{C}=\{C_1, C_2, \dots\}$, 
    a template tree $t$, and a program $p\in\programs$:
    $p\in \programs$ is removed by the solver iff $p$ violates $\mathbf{C}$.
\end{theorem}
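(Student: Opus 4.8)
The plan is to reduce the theorem about a conjunction of constraints to the already-established facts about individual constraints, relying on the monotonicity of the constraint system inside a uniform tree. First I would recall the statement of \Cref{sec:conjunction_of_constraints}: the solver propagates a conjunction by propagating each conjunct individually, and -- as asserted there and to be used here -- the result is independent of the order in which the conjuncts are propagated. So the proof structure is: (i) show that if the solver removes $p$ then $p$ violates some $C_i$, hence violates $\mathbf{C}$; and (ii) show that if $p$ violates $\mathbf{C}$ then the solver removes $p$.

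For direction (i), I would argue that the only way the solver removes $p$ (i.e.\ sets the state to infeasible or prunes a rule that $p$ needs) is through the propagation of one of the local constraints derived from some conjunct $C_i$. By \Cref{lemma:single_constraints} applied to that single constraint, such a removal happens iff $p$ violates $C_i$. Since $p$ violating $C_i$ implies $p$ violates the conjunction $\mathbf{C}$ (by \Cref{sec:method:language}, the definition of satisfying a set of constraints), we are done with this direction. The key subtlety I would make explicit is that decomposing each global $C_i$ into local constraints and deactivating satisfied ones does not change which programs are removed -- this is exactly the content of \Cref{lemma:single_constraints}, which is stated at the level of the solver's behaviour on a single constraint, so it already accounts for the local/global split.

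For direction (ii), suppose $p$ violates $\mathbf{C}$; then $p$ violates at least one conjunct $C_i$. By \Cref{lemma:single_constraints}, the propagation of $C_i$ alone would remove $p$. The remaining work is to argue that adding the other conjuncts $C_j$, $j\neq i$, cannot \emph{prevent} this removal. Here I would invoke monotonicity: each propagator only removes values from domains (never adds them), so running the propagators for all $C_j$ interleaved with those for $C_i$ can only shrink domains further; in particular, the fix-point still reaches a state in which the violation of $C_i$ is detected, since detecting a violation amounts to reaching an empty domain or an explicit infeasibility flag, and both are stable under further domain shrinking. I would also note order-independence (confluence of the fix-point) from \Cref{sec:conjunction_of_constraints}, so that it suffices to consider the schedule that propagates $C_i$ to completion.

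The main obstacle I expect is making the monotonicity/confluence argument rigorous enough to be convincing without a full formalization of the solver's fix-point semantics: one must be careful that ``the solver removes $p$'' is a well-defined event (e.g.\ ``$p$ is not among the programs enumerated from the uniform tree'') and that it is genuinely preserved when extra constraints are present. Concretely, I would phrase it as: a complete program $p$ is enumerated iff at the end of every reachable fix-point the domains along $p$'s AST remain nonempty and no infeasibility flag is set; adding constraints can only remove values, hence can only turn an enumerated $p$ into a non-enumerated one, never the reverse -- which is precisely the direction we need, since if $p$ violates $\mathbf{C}$ we \emph{want} it removed, and if $p$ satisfies $\mathbf{C}$ it satisfies every $C_i$ and so (by \Cref{lemma:single_constraints} in the contrapositive) no conjunct's propagation removes it. Wrapping these two directions together gives the biconditional, and hence arc-consistency for the conjunction.
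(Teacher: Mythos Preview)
Your proposal is correct and takes essentially the same approach as the paper: reduce to \Cref{lemma:single_constraints} for each conjunct, then use monotonicity (values are only removed, never added back) together with the order-independence of propagation from \Cref{sec:conjunction_of_constraints}. The paper's proof is considerably terser and does not spell out the two directions of the biconditional separately as you do, but the ingredients and the logical structure are the same.
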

\begin{proof}
    It suffices to show that the order of propagation does not influence the set of eventual solutions.
    $\mathbf{C}$ is violated if and only if any of its conjuncts are violated.
    Assume $p$ violates constraints $C_1, C_2,\dots$, and $p$ is correctly removed by enforcing $C_1$ using \Cref{lemma:single_constraints}.
    By the monotonicity of the constraint system, $p$ is not added back to the set of possible solutions. 
    Thus, $p$ cannot violate any other constraint.
\end{proof}

\section{Searching Shaped Program Spaces}
\label{sec:search}

\solvername only describes the space of possible programs, but relies on a search method to define how to iterate over it.
Our solver easily integrates into existing search procedures in program synthesis.
We demonstrate how this can be done for the families of top-down and bottom-up search.
In both cases, the concept is the same: both search procedures use the decomposition solver to construct uniform trees instead of individual ASTs.

\textit{Top-down search} maintains a priority queue of programs explored so far.
Initially, it contains only a decomposition solver state made up of a uniform hole containing the starting symbol and the global constraints.
Top-down search iteratively pops the first element in the priority queue and processes it.
At this point, two things can happen.
If the retrieved item is a \textit{uniform tree}, the search procedure retrieves the next complete program from it using the uniform solver and its hole heuristic, and enqueues the uniform tree back to the priority queue. 
Note that, depending on a priority function, the uniform tree can take arbitrary places in the priority queue. 
If the retrieved item is a \textit{decomposition solver state}, then the search procedure finds a non-uniform hole and refines it into uniform holes.
For each uniform hole, it creates a new state and enqueues it back into the priority queue.
At this point, constraint propagation is triggered, and all constraints that could be propagated are propagated. 
If there are no more non-uniform holes, the search procedure enqueues a uniform tree.

The choice of the hole heuristic and priority function defines the type of search.
Enqueuing trees at the top position implements a depth-first search, whereas enqueuing at the bottom implements breadth-first search.

\textit{Bottom-up search} iteratively combines programs from a bank of programs to construct bigger programs.
In our case, the bank contains uniform trees.
The initial bank contains uniform trees representing the terminal of a given grammar, i.e., all terminals of the same time end up in the same uniform tree.
Then, to construct bigger programs, the search procedure creates uniform holes for non-terminal derivation rules, i.e., all non-terminal rules of the same type and the same child types end up in the same uniform hole.
The search procedure then fills the chosen uniform hole with the uniform trees from the bank.
The resulting trees are, again, uniform trees from which complete programs can be enumerated.
The filling of the holes triggers constraint propagation, ensuring that every program retrieved from the uniform tree is valid.

\section{Experimental Evaluation}
\label{sec:experiments}

We evaluate our proposed solver by answering the following empirical research questions:
\begin{itemize}
    \item[\textbf{RQ1:}] How much of the program space needs to be pruned to justify the overhead of constraint propagation in program synthesis?
    \item[\textbf{RQ2:}] How much do state-of-the-art synthesizers benefit from modelling the program space on competition benchmarks?
    \item[\textbf{RQ3:}] Does combining multiple grounded constraints into a single first-order constraint improve the performance of the search?
\end{itemize}


\paragraph{Tasks}
We experiment over a set of diverse kinds of program synthesis domains:
\begin{itemize}
    \item\textbf{Arithmetic:} 
    The arithmetic grammar is closely related to the LIA grammar \cite{SMTlibBarrettMRST10}. 
    It is used to derive simple arithmetic expressions over the input symbol $x$ (see Figure \ref{fig:LIA_example}), but restricted to simple operations like $+, \times, -$. 
    Constraints are used to break symmetries (see \Cref{sec:motivating}).
    \item\textbf{Robots:} 
    The Robots environment \cite{BruteCropperD20} 
    consists of an $n \times n$ grid, and positions of the robot, a ball, and the goal. 
    The task is to learn to move the robot to the goal and let it grab and drop the ball. 
    Constraints are used to remove the many redundant routes, e.g., moving right directly after moving left.
    The grammar consists of 10 rules. 
    Even though this problem stems from a planning domain, we use it to highlight the power of syntactic constraints over black-box operators.
    Being stateful, the Robots domain is much easier to express in, e.g., Julia, than in SMT.
    \item\textbf{Symbolic:} 
    This semantics-free problem is used to push the solver to its limits, containing many operators of the same shape.
    Semantic grammars can be constrained to prune a large amount of the program space, but a significant amount of it remains valid. 
    In traditional CP, often only a few, if any at all, solutions exist. 
    Most of the time is spent finding a solution to the constraints.
    To mimic this setting and push constraint propagation to its limit, we will consider the highly constrained symbolic grammar, in which only a handful of programs satisfy the constraints.
    \item\textbf{Lists:} 
    The list grammar is commonly used for benchmarking in program synthesis~\cite{DeepcoderBalogGBNT17,NeoFengMBD18,RULER2021Nandi} and supports basic list operations and can be used to construct a program that takes two input integers $x$ and $y$ and returns a list. 
    Constraints are derived from Ruler~\cite{RULER2021Nandi}, e.g., reversing a list twice.
    \item\textbf{SyGuS SLIA and BV:} 
    A set of more difficult string-manipulation (SLIA) and bit-value-manipulation problems from the SyGuS Challenge 2019 \cite{SyGuS2019}, commonly used to benchmark synthesizers.
\end{itemize}

All concrete grammars and constraints used, are in the Supplementary Material.

\paragraph{Implementation details.}
All the experiments have been executed on an Intel i7-10750H CPU @ 2.60GHz with 16 GB of RAM, and were implemented in \verb|Herb.jl v0.6.0|\footnote{https://github.com/Herb-AI/Herb.jl/releases/tag/v0.6.0}.

\subsection{Reducing the Program Space (RQ1)}
\label{sec:exp:reducing_programspace}
We study (1) how syntactic constraints help to reduce the search space, and (2) whether syntactic constraints can remove many programs before exploring them.
Note that we do not execute a single generated program here, but only consider the size of the program space.

We will use top-down search to enumerate all programs in the program space of the Arithmetic, Robots, Symbolic, and Lists grammar up to a maximum depth. 
We omit the SyGuS grammars, SLIA and BV, because they (1) change between problems, and they (2) are too big to enumerate all possible programs naively.
We explore program space reduction on SLIA and BV in \Cref{sec:exp:guiding_synthesis}.
To measure the correctness and effectiveness of the constraint propagation, we will compare three variations of enumeration.
\begin{enumerate}
    \item \textbf{Plain Enumeration (+checking)}. Enumerates all programs, ignoring the constraints. Then, retrospectively checks the constraints and eliminates all programs that violate any constraint.  This is `generate and test' in CP terminology.
    \item \textbf{Plain Enumeration}. Enumerates all programs, ignoring the constraints.
    \item \textbf{Constrained Enumeration}. Enumerates all programs that satisfy the constraints using the proposed constraint solvers.
\end{enumerate}

For all experiments and using the exact same constraints, the programs obtained from methods 1 and 3 are exactly the same.
This means constraint propagation does not eliminate valid programs, nor keep any invalid programs. 
Although this is not a hard proof of correctness, it does increase the confidence that the propagators are implemented correctly. 

In \Cref{fig:RemainingProgramSpace}, we compare the program space with and without constraints by dividing the number of valid programs by the total number of programs using top-down enumeration. 
We see that the constraints significantly reduce the program space.\footnote{The symbolic grammar is omitted from this graph, since its smallest valid program has 12 nodes.}
With a maximum of 11 nodes, the imposed constraints can already eliminate roughly $99\%$ of the total program space for the List and Robots grammar.
Further, the runtime of enumeration reduces significantly using constraints:
For enumerating all programs with 12 AST nodes, search time is 3000, 123, 12, and 2.5 times slower for the Symbolic, Robot, Arithmetic, and Lists grammar, respectively.
The full results can be found in the Supplementary Material. 

\begin{figure}[ht!]
    \begin{subfigure}{0.49\textwidth}
        \centering
        \includegraphics[width=\textwidth]{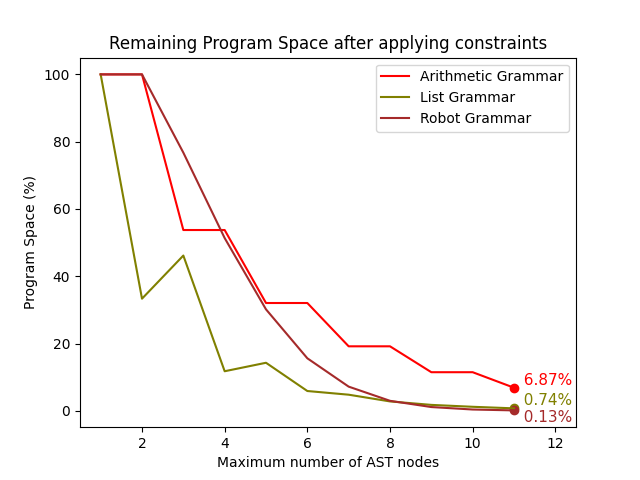}
        \caption{Program Space}
        \label{fig:RemainingProgramSpace}
    \end{subfigure}
    \begin{subfigure}{0.49\textwidth}
        \centering
        \includegraphics[width=\textwidth]{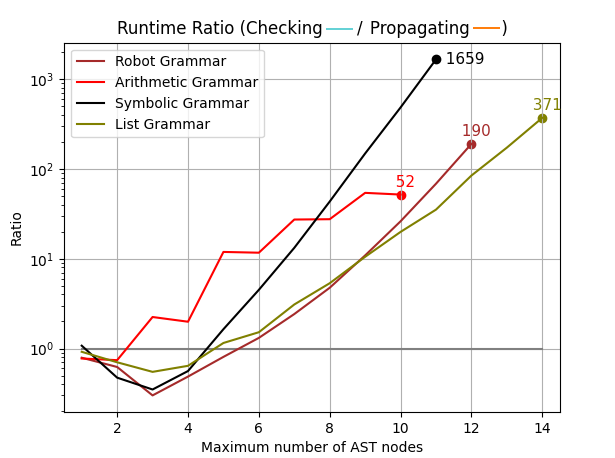}
        \caption{Runtime Ratio Checking vs. Propagating}
        \label{fig:ProgramEnumerationRuntime}
    \end{subfigure}
    \label{fig:spacereduction}
    \caption{(a) The remaining portion of the program space after applying the constraints, represented by the ratio of valid/total programs in the grammar; 
    (b) Runtime ratio of top-down program enumeration checking retrospectively to our propgation system. 
    We observe that checking is significantly slower than our propagation of syntactic constraints.}
\end{figure}

\Cref{fig:ProgramEnumerationRuntime} compares the runtime of enumeration checking constraints in comparison to propagating constraints.
We observe that the runtime of plain enumeration checking retrospectively the constraint is strictly higher than that of plain enumeration alone. 

Propagating the constraints during top-down search outperforms plain enumeration. 
This is an expected result, as only a small fraction of the program space needs to be enumerated in a constrained search. 
The ratio plot reveals that for all four grammars, the constrained search performs significantly better as the program space grows larger, but the exact improvement highly depends on the grammar.
An interesting observation is that the plain enumeration outperforms constrained enumeration for small program spaces. 
In these cases, the reduced program space does not justify the overhead of propagating constraints. 
We conclude that when a relatively large portion\footnote{Roughly speaking, $75\%$ or more. 
This depends on many other factors, such as the grammar and the type and amount of constraints.} of the program space can be eliminated, constraint propagation outweighs its overhead.


We also compare methods 1 and 3 for the bottom-up search. 
As the program space is exactly the same as for top-down enumeration, we compare the bank size during the enumeration of the Lists grammar.
Naturally, we can significantly reduce the bank size by expressing groups of programs as uniform trees.
For example, we require only 1318 uniform trees to represent 45,594 programs in the bank for a program depth of 5.
As programs from the bank could satisfy a constraint with an added parent, our current constraints cannot remove any (partial) solutions from the bank. 

\subsection{Guiding State-of-the-Art Synthesizers (RQ2)}
\label{sec:exp:guiding_synthesis}
A core motivation for using syntactic constraints is modelling the program space to guide the search towards \textit{useful} programs.
Hence, to answer RQ2, we evaluate two state-of-the-art synthesizers, namely Probe \cite{ProbeBarkePP20} and EUSolver \cite{EUSolverSiYDNS19}, on two common program synthesis benchmarks: The SLIA and the BV benchmark from the SyGuS challenge \cite{SyGuS2019}.
We compare both (1) without constraints, (2) with constraints removing redundancies, and (3) removing redundancies and with \textit{useful} constraints.
Not removing redundancies but still useful, we enforce that all programs must contain the input symbol in steps (2) and (3).

Probe and EUSolver also define a notion for useful programs. 
Both leverage sub-optimal solutions, i.e., solutions that solve some but not all given input-output examples.
Probe prioritizes grammar rules that were part of partially successful programs that were found earlier.
By contrast, following a divide-and-conquer approach, EUSolver tries to combine partially successful programs into a larger successful program.

We use constraints to nudge both synthesizers to find useful programs quicker by constraining the search space.
For SLIA, we follow our motivating example: 
The grammars are usually built around a core grammar, adding specific operators for the specific problem.
We enforce that the solution must contain all these added rules.
For BV, we follow another idea:
We observe that both approaches benefit from `diversity' in their selection of rules. 
That is, we want both approaches not to overuse the same operator.
We thus enforce that operators cannot be used twice in a row, e.g., removing $\Int + (1 + \Int)$, but not $(\Int + x) * (1 + \Int)$.
Note that we can choose to remove constraints during search, for example, whenever a new uniform tree is initialized or between cycles of synthesizers.

We base EUSolver on a breadth-first search.
Further, we enumerate SLIA and BV for $10^6$ ($3\times 10^6$ for Probe due to multiple cycles) and $10^5$ programs.

\begin{figure}[ht!]
    \begin{subfigure}{0.49\textwidth}
        \centering
        \includegraphics[width=\textwidth]{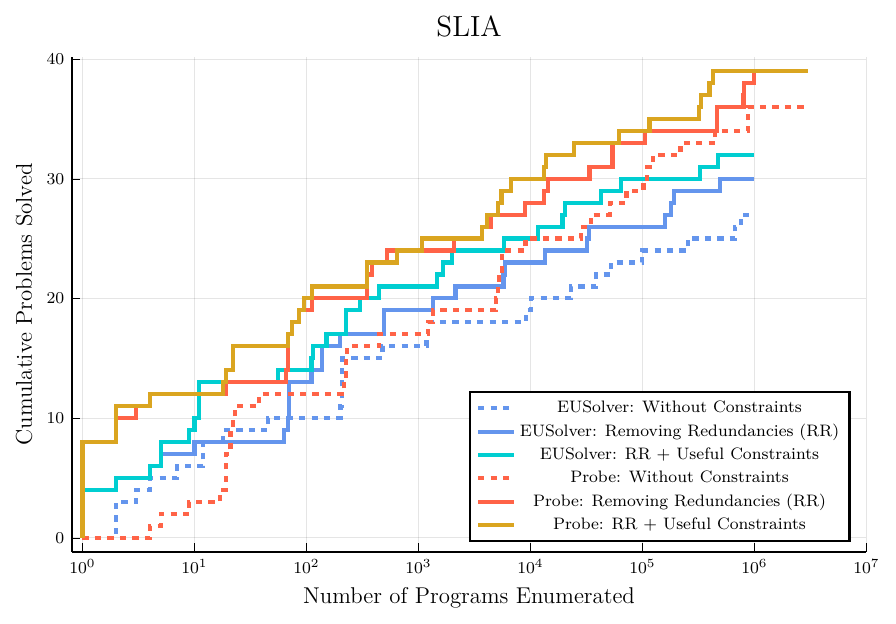}
        \caption{String manipulations}
        \label{fig:SLIA_results}
    \end{subfigure}
    \begin{subfigure}{0.49\textwidth}
        \centering
        \includegraphics[width=\textwidth]{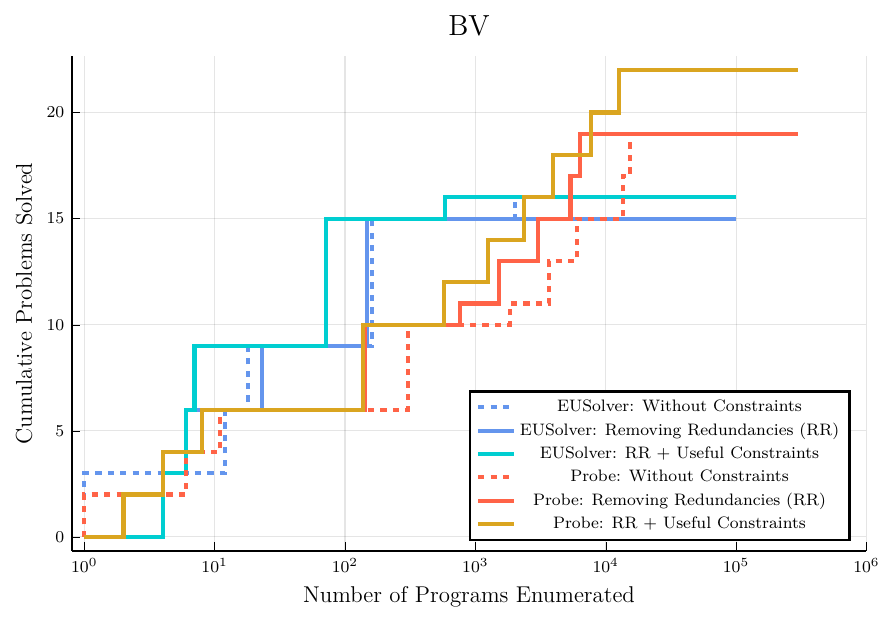}
        \caption{Bit-value manipulations}
        \label{fig:BV_results}
    \end{subfigure}
    \caption{Results of the synthesizers Probe and EUSolver on the (a) SLIA and (b) BV benchmark from the SyGuS challenge using no constraints, with constraints removing redundancies, and adding \textit{useful} constraints. 
    Useful constraints can be a tool to nudge synthesizers towards finding solutions faster. 
    }
    \label{fig:SOTA_reduction}
\end{figure}

The results are shown in \Cref{fig:SOTA_reduction}.
We see that both approaches benefit from the added constraints and find solutions faster.
Further, the simple, useful constraints let both approaches find shallow solutions even sooner.
We conclude two things:
First, formulating useful constraints in our language in real-world competitive program synthesis benchmarks is straightforward.
Second, even simple constraints have large impacts on the program space, and thus a better search.

\subsection{First-order Constraints (RQ3)}
\label{sec:exp:first-order_constraints}

We introduced two kinds of first-order nodes that can be used to constrain a class of sub-trees: domain and variable nodes. 
We analyze the impact of using first-order constraints on the propagation to answer RQ3.

We first test the effectiveness of combining constraints using domain nodes using the List grammar, a common program synthesis grammar. 
Enumerating all programs up until depth 8, we evaluate two settings: 
Either using the two first-order constraints or 12 equivalent grounded constraints. 
We report the run-time and propagations enumerating the entire program space.

\Cref{fig:FirstOrdervsGroundedListGrammar} shows that having first-order constraints means fewer constraint propagations need to run. 
We also see an improvement in runtime, which indicates that a higher quantity of grounded propagators is more expensive than a lower quantity of first-order propagators. 

\begin{figure}[h!]
    \centering
    \begin{tabular}{lrrrr}
        & \multicolumn{1}{l}{Programs} & \multicolumn{1}{l}{Runtime} & \multicolumn{1}{l}{Search Nodes} & \multicolumn{1}{l}{Propagate Calls} \\ 
        \hline
        First-order Constraints & 1 779 631 & 15.498s & 2 263 318 & 2 626 046 \\
        Grounded Constraints & 1 779 631 & 20.187s & 2 263 318 & 3 723 245 \\
        \hline
    \end{tabular}
    \caption{Enumerating programs of the List Grammar with a maximum program size of 15 using first-order and grounded constraints.}
    \label{fig:FirstOrdervsGroundedListGrammar}
\end{figure}

Second, we consider the Symbolic Grammar to measure the effectiveness of domain.
Again, note that this grammar contains no semantics and is used to push the solver to its limits.
In two experiments, we respectively introduce one of the following \texttt{Forbid} constraints:

\begin{equation*}
        \texttt{Forbid}\left( \vcenter{\hbox{\begin{forest}
            [{$\{b1,b2,b3\}$}
                [{$\{t1,t2,t3,t4\}$}]
                [{$\{t1,t2,t3,t4\}$}]
            ]
        \end{forest}}} \right)
\end{equation*}
Both constraints eliminate all trees that contain any of the $3\cdot4\cdot4 = 48$ forbidden sub-trees. 
Alternatively, we can break down the first-order constraint into $48$ grounded constraints, each forbidding a specific sub-tree.
Each individual constraint must be propagated and enforced individually.

We run the constrained program enumeration of the Symbolic Grammar up to a maximum of 8 AST nodes, and vary the number of constraints we use. 
In each run, we will use 1 first-order constraint and 1 grounded constraint for each missing case. 
For example, suppose we remove one of the rules from a domain node.
Then, $4\cdot 4 = 16$ grounded constraints must be constructed to cover the missing cases: 

\begin{equation*}
        \texttt{Forbid}\left( \vcenter{\hbox{\begin{forest}
            [{$\{b1,b2,\xcancel{b3}\}$}
                [{$\{t1,t2,t3,t4\}$}]
                [{$\{t1,t2,t3,t4\}$}]
            ]
        \end{forest}}} \right)
        \longrightarrow
        \texttt{Forbid}\left( \vcenter{\hbox{\begin{forest}
            [{$\{b3\}$}
                [{$\{t1\}$}]
                [{$\{t1\}$}]
            ]
        \end{forest},
        \begin{forest}
            [{$\{b3\}$}
                [{$\{t1\}$}]
                [{$\{t2\}$}]
            ]
        \end{forest},\dots
        }} \right)
\end{equation*}
 
Note that the number of valid programs remains the same. 
The difference lies in the number of propagators and inference strength. 
Regardless of which combination of constraints is used, we will always enumerate 1,358,656 out of the 2,355,328 total programs.

However, we see a positive correlation between the number of constraints and propagations.
For example, with 10 constraints, \solvername requires $\approx$200,000 propagation calls, whereas with 50 constraints, more than 3,000,000 propagations are required.
We observe a similar trend for the other tested grammars.
The full results are shown in the Supplementary Material.
We conclude that a higher quantity of grounded propagators is computationally more expensive than a lower quantity of first-order propagators.

We also see a positive correlation between the number of constraints and search nodes.
This can be explained by inference strength. 
When a first-order constraint is split up into grounded constraints, it can no longer exploit constraint interaction. 
For example: $\texttt{Forbid}(\{b1, b2\})$ is able to deduce that the domain $\{b1, b2\}$ is inconsistent. 
However, the individual constraints $\texttt{Forbid}(b1)$ and $\texttt{Forbid}(b2)$ separately cannot make any deductions towards inconsistency.
They need to enumerate the solutions within the tree before they can spot the same inconsistency.

Testing the influence of variable nodes in constraints, we use constraints on the Arithmetic grammar.
All constraints for this grammar are formulated using a variable node, e.g., \texttt{Forbid(0+\Int)} or \texttt{Ordered(:a + :b), [:a,:b]}.
We see that variable nodes allow for a brief and concise formulation of many intuitive constraints. 
Grounding out all constraints is naturally much slower.
For example, enforcing \texttt{Forbid 0+:a} at the root grounds out to 184,632,701 constraints for a maximum size of 12 AST nodes.
Each constraint is propagated separately.
More interesting is the inference power we gain over checking, shown in \Cref{fig:ProgramEnumerationRuntime}.
We see that many programs can be removed early, on leading to a significantly reduced program enumeration time.

We conclude that bundling grounded constraints into a first-order constraint does increase the strength of the propagation inference.

\section{Related Work}
\label{sec:related_work}

\paragraph{Encoding synthesis into constraints.}
A range of earlier program synthesizers, such as Sketch \cite{Sketch2009Solar-Lezama}, Brahma \cite{Brahma2011Gulwani}, SyPet \cite{SyPet2017Feng}, and Rosette \cite{rosetteTorlakB13,rosetteTorlakB14} encode the program synthesis problem in a constraint satisfaction problem.
A constraint solver is then employed directly to find the solution program. 
Sketch and Rosette can only make this translation by bounding the program space, and constraining the set of allowed operators.
While these approaches encode program spaces defined by grammar into constraints, we go further and model a program space beyond its grammar.
We further provide a clear interface for developing custom constraint propagators on the level of ASTs, and in \solvername demonstrate how it supports better constraint solving tailored towards program synthesis.
Moreover, these earlier approaches require an encoding of programming language semantics within constraints, whereas our solver does not. 

\paragraph{Solving uniform trees with existing constraint solvers.}
Constraint solving on the level of uniform trees also makes it easier to deploy existing constraint solvers more effectively.
Our goal is to gradually add support for using existing solvers, so that the users can flexibly switch between constraint solvers to enumerate uniform trees.
We expect that existing SMT-solvers like Z3 \cite{z3} and CVC5 \cite{cvc5}, and ASP-solvers like Clingo \cite{clingo} will be more efficient at certain types of programs and constraints.
However, our solver provides a clear interface for implementing custom constraint propagators, that makes adding new propagators and constraints straightforward.

\paragraph{Popper and solver-as-program-enumerator.}
Popper \cite{POPPER} is the closest related work to ours and gives direct inspiration.
Popper uses an ASP solver \texttt{clingo} to enumerate programs from a defined program space, i.e., it specifies a constraint model such that every solution is a valid program.
An innovative part about Popper is that every explored program, unless it is the solution program, yields a new constraint that further restricts the program space. 
Therefore, Popper, like us, models the program space beyond the grammar restrictions.
In contrast to our approach, which supports defining and constraint propagation over arbitrary program space, Popper \emph{only supports the logic programming family} of programming languages.
Moreover, the set of constraints Popper induces while searching for a program is pre-defined; our solver allows users to formulate arbitrary first-order constraints.

\paragraph{Semantics-guided synthesis.}
A semantics-guided synthesiser uses constraints in the form of constrained horn clauses \cite{chc} to define the semantics of a programming language and casts the synthesis problem as a proof search \cite{SemGuSToolkitJohnsonRRD24,semgus}.
This line of work has an objective different from ours, but is highly compatible with our solver.
Having the semantics of a programming language available within \solvername would allow the user to specify more effective constraints over the semantics of programs.

\paragraph{Counter-example guided synthesis.}
This 
family of synthesisers \cite{cegis} uses constraint solvers to verify whether a proposed program is a solution to the provided specification.
If it is not, then the verifier provides a counter-example for the synthesiser.
In contrast, our approach uses constraints to aid search by pruning the program space on the program level, not by introducing new examples.

\paragraph{Automated program repair (APR)}
APR \cite{APRIntroGouesPR19} can be formulated as a program synthesis problem starting from partial programs.
Recent approaches like CPR \cite{CPRShariffdeenNGR21} and ExtractFix \cite{ExtractFixGaoWDJXR21} extract and propagate constraints to remove concrete subprograms from their search space like Popper. 
EffFix \cite{EffFixZhangCSMR25} uses another notion of constraints, using probabilities to indirectly describe constraints and guide the search.
Having a similar goal of removing unwanted programs as early as possible, we posit that our constraint formalism could benefit the APR community.
Further, APR applications could help to extend our limited library of constraints.

\section{Conclusion and Future Work}
\label{sec:conclusion_future}

This work takes a new approach to taming the size of the search space in program synthesis.  Specifically, we focused on defining the syntactic space of programs, beyond formulating a context-free grammar. 
Such syntactic constraints open the possibility of introducing language biases into the program space, inspired by \citet{POPPER}. 

The contributions of this paper are made concrete in a novel constraint solver, \solvername, for propagating and solving syntactic constraints efficiently.  
\solvername adopts a hybrid method of two sub-solvers: (1) the decomposition solver can be used to construct simple search spaces, represented by uniform trees, and (2) the uniform solver is restricted to one shape, but can leverage common CSP techniques.

We compared \solvername's constrained program enumeration against plain program enumeration.
Although the effectiveness highly depends on the grammar and its constraints, the results show that constraint propagation significantly outperforms a retrospective checking the constraints. 
Furthermore, we showed that combining grounded constraints into a single first-order constraint can further reduce the number of search nodes. 
This is an expected result as a combined constraint is able to make deductions based on the interaction of the grounded constraints and therefore has stronger inference.
Eventually, we showed that constraints can be used beyond removing symmetries, guiding state-of-the-art to find solutions quicker.

In future work, we aim to extend the library of constraints presented to other families of constraints.
Second, while for now we have only considered constraining the grammar itself, we did not take the problem to be solved into account.
A constraint extractor could take the input--output examples of the problem and generate grammar constraints for the specific problem instance. 
We also aim to generate new constraints based on failed input--output examples during search. 

\bibliographystyle{ACM-Reference-Format}
\bibliography{bibfile}

\appendix

\newpage
\section{Constraint Propagation Algorithms}
Concrete implementations of the respective propagation algorithms.
A solver state is set to infeasible if the current tree cannot contain a valid solution.

\subsection{Forbid}
\begin{algorithm}[H]
        \begin{algorithmic}
        \STATE node $\gets$ get\_node\_at\_location(solver, constraint.path)
        \STATE match $\gets$ pattern\_match(node, constraint.tree)
        \IF{match isa HardFail}
            \STATE deactivate!(solver, constraint)
        \ELSIF{match isa Success}
            \STATE set\_infeasible!(solver, constraint)
        \ELSIF{match isa SuccessWhenHoleAssignedTo}
            \STATE remove!(solver, match.hole, match.rule)
            \STATE deactivate!(solver, constraint)
        \ENDIF
    \end{algorithmic}
    \caption{Propagation of a local forbid constraint.}
    \label{alg:propagatelocalforbidden}
\end{algorithm}

\subsection{Contains}
Contains is closely related to the unique constraint. 
\begin{algorithm}[h!]
        \begin{algorithmic}
        \FOR{$i \in$ constraint.indices}
            \STATE candidate $\gets$ constraint.candidates[$i$]
            \STATE match $\gets$ pattern\_match(candidate, constraint.tree)
            \IF{match isa HardFail}
                \STATE remove!(constraint.indices, $i$)
            \ELSIF{match isa Success}
                \STATE deactivate!(solver, constraint)
            \ENDIF
        \ENDFOR
        \STATE $n \gets$ length(constraint.indices)
        \IF{$n$ == $0$}
            \STATE set\_infeasible!(solver, constraint)
        \ELSIF{$n$ == $1$}
            \STATE $i \gets$ minimum(constraint.indices)
            \STATE candidate $\gets$ constraint.candidates[$i$]
            \STATE result $\gets$ make\_equal!(solver, candidate, constraint.tree)
            \IF{result isa HardFail}
                \STATE set\_infeasible!(solver, constraint)
            \ELSIF{result isa Success}
                \STATE deactivate!(solver, constraint)
            \ENDIF
        \ENDIF
    \end{algorithmic}
    \caption{Propagation of a 'contains sub-tree' constraint.}
    \label{alg:propagatelocalcontainssubtree}
\end{algorithm}

\subsection{Ordered Constraint}
The \texttt{Ordered} constraint needs to be enforced wherever the shape can match.
A local ordered constraint is posted.

\begin{algorithm}[H]
        \begin{algorithmic}
        \STATE node $\gets$ get\_node\_at\_location(solver, constraint.path)
        \STATE vars $\gets$ Dict()
        \STATE match $\gets$ pattern\_match(node, constraint.tree, vars)
        \IF{match isa Fail}
            \STATE deactivate!(solver, constraint)
        \ELSIF{match isa Success}
            \STATE should\_deactivate $\gets$ true
            \STATE $n$ $\gets$ length(constraint.order)
            \FOR{name1, name2 $\in$ zip(constraint.order[1:$n$-1], constraint.order[2:$n$])}
                \STATE result $\gets$ make\_less\_than\_or\_equal!(solver, vars[name1], vars[name2])
                \IF{result isa HardFail}
                    \STATE set\_infeasible!(solver)
                \ELSIF{result isa SoftFail}
                    \STATE should\_deactivate $\gets$ false
                \ENDIF
            \ENDFOR
            \IF{should\_deactivate}
                \STATE deactivate!(solver, constraint)
            \ENDIF
        \ENDIF
    \end{algorithmic}
    \caption{Propagation of a local ordered constraint.}
    \label{alg:propagatelocalordered}
\end{algorithm}

\section{Search Algorithms using \solvername}
We describe how to implement two families of search algorithms in our paper. 
The concrete algorithms are shown in \Cref{alg:topdowniterator} and \Cref{alg:bottomupiterator}.

\begin{algorithm}[t]
    \begin{algorithmic}
        \STATE solver, pq $\gets$ init\_solver(grammar, starting\_symbol)
        \WHILE{len(pq) $> 0$}
            \STATE state $\gets$ pq.dequeue()
            \IF{state isa DecompSolverState}
                \STATE solver.load\_state(state)
                \STATE hole $\gets$ hole\_heuristic(solver.get\_tree())
                \IF{!hole}
                    \STATE uniform\_solver $\gets$ UniformSolver(grammar, solver.get\_tree())
                    \STATE pq.enqueue(uniform\_solver, priority\_function(...))
                \ELSE
                    \FOR{uniform\_domain $\in$ partition(hole.domain)}
                        \STATE solver.remove\_all\_but!(hole, uniform\_domain)
                        \STATE pq.enqueue(solver.get\_state(), priority\_function(...))
                    \ENDFOR
                \ENDIF
            \ELSIF{state isa UniformSolver}
                \STATE program $\gets$ next\_solution(state)
                \STATE pq.enqueue(state, priority\_function(...))
                \item \textbf{yield} program
            \ENDIF
        \ENDWHILE
    \end{algorithmic}
    \caption{Top-Down Iterator algorithm using \solvername's interface}
    \label{alg:topdowniterator}
\end{algorithm}

\begin{algorithm}[t]
    \begin{algorithmic}
        \STATE solver $\gets$ init\_solver(grammar)
        \STATE bank $\gets$ uniform\_tree(terminals(grammar))
        \STATE uniform\_partitions $\gets$ partition(grammar.rules)
       \FOR{ $n\in$ 1:max\_depth}
            \STATE uniform\_trees $\gets$ substitute\_subtree(uniform\_partitions, bank)
            \STATE pq.enqueue(UniformSolver(grammar, tree)) for tree in uniform\_trees
            \WHILE{len(pq) $> 0$}
                \STATE state $\gets$ pq.dequeue()
                \STATE program $\gets$ next\_solution(state)
                \STATE pq.enqueue(state, priority\_function(...))
                \STATE update\_bank(bank, program) \COMMENT{Adds program to a uniform tree}
                \STATE \textbf{yield} program
            \ENDWHILE
        \ENDFOR
    \end{algorithmic}
    \caption{Bottom-Up Iterator algorithm using \solvername's interface.}
    \label{alg:bottomupiterator}
\end{algorithm}

\section{Grammars}

In this section, we describe all grammars and constraints used in the experiments. 
We use code as a more compact representation of constraints. 
First, note that a \texttt{RuleNode} denotes a value node and is used to describe all ASTs.
Second, we use a special variant of the \texttt{Forbid} constraint for the Robots domain:
\texttt{ForbiddenSequence}.
For example, the constraint \texttt{ForbiddenSequence([+,:a, +])} is violated if on any path from the root to any leaf the $+$ occurs twice.
All proofs and definitions remain valid for this constraint.

\subsection{Arithmetic/LIA Grammar}
The arithmetic grammar is closely related to the LIA grammar (\Cref{fig:LIA_example}) can be used to synthesize simple arithmetic expressions with an input symbol $x$. For example, given the IO examples $1 \rightarrow 2$ and $5 \rightarrow 6$, the synthesizer will return $x + 1$ as the target program. 

This variant of the grammar includes 10 terminal rules for constants, instead of 1.

\begin{figure}[h!]
    \centering
    \begin{lstlisting}[language=Julia]
grammar = @csgrammar begin
    Int = Int + Int
    Int = Int * Int
    Int = Int - Int
    Int = |(0:9)
    Int = x
end
    \end{lstlisting}
    \caption{An arithmetic grammar, closely related to LIA.}
    \label{fig:simplifiedarithmeticgrammar}
\end{figure}

\noindent The constraints for the arithmetic grammar (See \Cref{fig:simplifiedarithmeticgrammarconstraints}) all break semantic symmetries in integer arithmetic. They forbid trivial cases like adding 0 and multiplying by 1. The ordered constraints will be used to break the commutativity of $+$ and $\times$. This is not an exhaustive list of symmetry-breaking constraints, and could be extended to break more symmetries.

\begin{figure}[h!]
    \centering
    \begin{lstlisting}[language=Julia]
Forbidden(RuleNode(times, [VarNode(:a), RuleNode(zero)]))
Forbidden(RuleNode(minus, [VarNode(:a), VarNode(:a)]))
Forbidden(RuleNode(minus, [VarNode(:a), RuleNode(zero)]))
Forbidden(RuleNode(plus, [VarNode(:a), RuleNode(zero)]))
Forbidden(RuleNode(times, [VarNode(:a), RuleNode(one)]))
Forbidden(RuleNode(minus, [
    RuleNode(times, [VarNode(:a), RuleNode(two)])
    VarNode(:a)
]))
Forbidden(RuleNode(plus, [VarNode(:a), VarNode(:a)]))
Forbidden(RuleNode(minus, [
    RuleNode(times, [VarNode(:a), RuleNode(three)])
    VarNode(:a)
]))
Forbidden(RuleNode(plus, [RuleNode(zero), VarNode(:a)]))
Forbidden(RuleNode(times, [RuleNode(zero), VarNode(:a)]))
Forbidden(RuleNode(times, [RuleNode(one), VarNode(:a)]))
Ordered(RuleNode(plus, [VarNode(:a), VarNode(:b)]), [:a, :b])
Ordered(RuleNode(times, [VarNode(:a), VarNode(:b)]), [:a, :b])\end{lstlisting}
    \caption{13 constraints on the arithmetic grammar defined in Figure \ref{fig:simplifiedarithmeticgrammar}}
    \label{fig:simplifiedarithmeticgrammarconstraints}
\end{figure}

\subsection{Robots Grammar}
The robot grammar \cite{BruteCropperD20} describes programs in the Robot Environment. 
This environment consists of an $n \times n$ grid, a robot and a ball. 
A state describes the position of the robot and the ball. 
The task is to learn to transform the initial state to the final state by moving the robot and letting it grab and drop the ball.

We will simplify the robot grammar by omitting conditional statements, as the environment is too simple for them to be needed. 
\noindent In the upcoming experiments, we will use the robot grammar as defined in Figure \ref{fig:simplifiedrobotgrammar}, equipped with the grammar constraints in Figure \ref{fig:simplifiedrobotgrammarconstraints}. 
\begin{figure}[h!]
    \centering
    \begin{lstlisting}[language=Julia]
grammar = @csgrammar begin
    Sequence = (moveRight(); Sequence)
    Sequence = (moveDown(); Sequence)
    Sequence = (moveLeft(); Sequence)
    Sequence = (moveUp(); Sequence)
    Sequence = (drop(); Sequence)
    Sequence = (grab(); Sequence)
    Sequence = return
end
    \end{lstlisting}
    \caption{A simplified grammar for the robot environment.}
    \label{fig:simplifiedrobotgrammar}
\end{figure}

\begin{figure}
    \centering
    \begin{lstlisting}[language=Julia]
# the robot can drop and grab at most once
Unique(r_drop)
Unique(r_grab)

# shortest path constraints
ForbiddenSequence([r_left, r_right], ignore_if = [r_drop, r_grab])
ForbiddenSequence([r_right, r_left], ignore_if = [r_drop, r_grab])
ForbiddenSequence([r_up, r_down], ignore_if = [r_drop, r_grab])
ForbiddenSequence([r_down, r_up], ignore_if = [r_drop, r_grab])

# symmetry breaking constraints
ForbiddenSequence([r_down, r_right], ignore_if = [r_drop, r_grab])
ForbiddenSequence([r_down, r_left], ignore_if = [r_drop, r_grab])
ForbiddenSequence([r_up, r_right], ignore_if = [r_drop, r_grab])
ForbiddenSequence([r_up, r_left], ignore_if = [r_drop, r_grab])
    \end{lstlisting}
    \caption{10 constraints on the robot grammar defined in Figure \ref{fig:simplifiedrobotgrammar}}
    \label{fig:simplifiedrobotgrammarconstraints}
\end{figure}

\subsection{Symbolic Grammar}
\label{sec:symbolicgrammar}
The symbolic grammar (see Figure \ref{fig:symbolicgrammar}), and its 21 constraints have no semantics and exists solely to push the constraint solver to its limits.

Semantic grammars can be constrained to prune a large amount of the program space, but a significant amount of it remains valid. In traditional CP, often only a few, if any at all, solutions exist. In such cases, most time is spent in finding even a single solution. To mimic this setting and push constraint propagation to its limit, we will consider the highly constrained symbolic grammar, in which only a handful of programs satisfy the constraints.

\begin{figure}[h!]
    \centering
    \begin{lstlisting}[language=Julia]
grammar = @csgrammar begin
    S = t1 #terminals               # rule 1
    S = t2                          # rule 2
    S = t3                          # rule 3
    S = t4                          # rule 4
    S = u1(S) #unary functions      # rule 5
    S = u2(S)                       # rule 6
    S = u3(S)                       # rule 7
    S = b1(S, S) #binary functions  # rule 8
    S = b2(S, S)                    # rule 9
    S = b3(S, S)                    # rule 10
end\end{lstlisting}
    \caption{A symbolic grammar without any semantics.}
    \label{fig:symbolicgrammar}
\end{figure}

\subsection{Lists Grammar}
The Lists grammar (see \Cref{fig:listgrammar}) is a semantic grammar that can be used for list manipulations. It supports basic list operations and can be used to construct a program that takes two input integers $x$ and $y$ and returns a list.

To illustrate program synthesis using this grammar, consider the following two IO examples: 

\[
\begin{aligned}
(x = 0,\; y = 1) &\;\longrightarrow\; [1, 3],\\
(x = 5,\; y = 4) &\;\longrightarrow\; [3, 9].
\end{aligned}
\]

The intended behaviour of the program is to return a sorted list of the sum of the input values and the constant $3$:

\begin{equation*}
    \verb|sort!(push!(push!([], sum(push!(push!([], y), x))), 3))|
\end{equation*}

The constraints for the list grammar (See Figure \ref{fig:listgrammarconstraints}) eliminate semantically redundant programs. For example, by forbidding sorting a list twice in a row.
We also define a domain rule node representing all unary functions. 
Two of the constraints use this node to forbid unary functions on an empty list (line 14) and a singleton list (line 15), respectively. 
The constraint on line 17 aims to forbid reversing a constant list. 
For example, \verb|reverse!(push!(push!([],2),x))))| can be forbidden, as it is already represented by \verb|push!(push!([],x),2)|.

Just like for the arithmetic grammar, the provided list of constraints is not exhaustive and could be further extended. 
However, they do prune enough of the program space to evaluate the performance of constraint propagation in the upcoming sections.

\begin{figure}[h!]
    \centering
    \begin{lstlisting}[language=Julia]
grammar = @csgrammar begin
    List = []
    List = push!(List, Int)
    List = reverse!(List)
    List = sort!(List)
    List = append!(List, List)
    Int = maximum(List)
    Int = minimum(List)
    Int = sum(List)
    Int = prod(List)
    Int = |(1:3)
    Int = x
    Int = y
end
    \end{lstlisting}
    \caption{A grammar with basic list operations.}
    \label{fig:listgrammar}
\end{figure}

\begin{figure}[h!]
    \centering
    \begin{lstlisting}[language=Julia]
A = VarNode(:a)
B = VarNode(:b)
V = VarNode(:v)
unaryfunction(node::AbstractRuleNode) = DomainRuleNode(grammar, 
    [_reverse, _sort, _max, _min, _sum, _prod], [node])

Forbidden(reverse(reverse(A)))
Forbidden(sort(reverse(A)))
Forbidden(sort(sort(A)))
Forbidden(sort(append(A, reverse(B))))
Forbidden(sort(append(A, sort(B))))
Forbidden(sort(push(sort(A), V)))

Forbidden(unaryfunction(empty))
Forbidden(unaryfunction(push(empty, V)))

ForbiddenSequence([_reverse, _empty], ignore_if=[_sort, _append])
ForbiddenSequence([_append, _empty], ignore_if=[_reverse, _sort])
    \end{lstlisting}
    \caption{10 constraints on the list grammar defined in Figure \ref{fig:listgrammar}}
    \label{fig:listgrammarconstraints}
\end{figure}

\subsection{SLIA and BV}
SyGuS SLIA and BV are two program synthesis benchmarks from the SyGuS Challenge 2019\cite{SyGuS2019}, commonly used to benchmark synthesizers.
Exploring constraints in real-world program synthesis benchmarks, we use three sets of constraints.
\begin{enumerate}
    \item \textbf{No constraints:} --
    \item \textbf{Constraints removing redundancies:} Programs must contain the input, breaking symmetry of equality and other commutative operators. For BV these are bitvector operations including addition, xor, and, or, nand, and nor.
    \item \textbf{Removing redundancies + \textit{useful} constraints}: For BV, adding to the previous constraints, we add the constraint \texttt{ForbiddenSequence([i,i]} for all possible grammar indices of non-terminal symbols. 
    For SLIA, we add \texttt{Contains(i)} for every rule index i that is terminal, not an input, not an empty string or space, and that has the String type.
\end{enumerate}

\section{Ablation Studies and Experiments}

\subsection{Program Space Reduction}
We compare the runtime of enumeration without constraints vs our propagation system.
The results are shown in \Cref{fig:AblationStudyRuntimeChecking}.
We see that our propagation is strictly faster than pure enumeration, removing many programs before they are enumerated.

\begin{figure}[h!]
    \centering
    \includegraphics[width=0.55\textwidth]{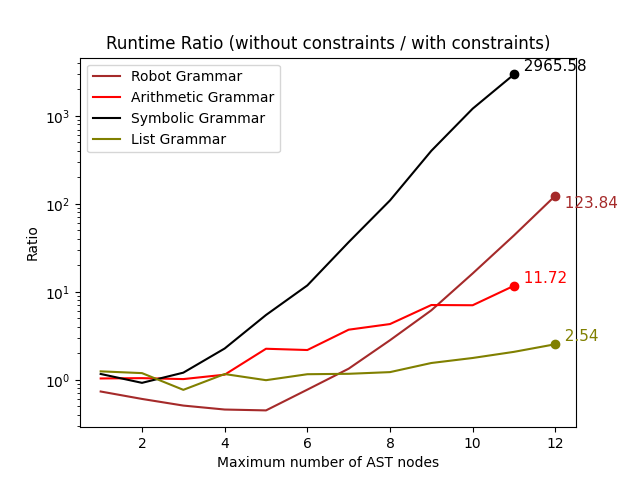}
    \caption{Runtime ratio of pure enumeration without constraints to our propagation system on the four grammars.}
    \label{fig:AblationStudyRuntimeChecking}
\end{figure}

We further list the concrete reduction of the program spaces:

\subsubsection{Arithmetic Grammar}
See \Cref{tab:space_size_arithmetic}.

\begin{table}[h!]
\centering
\begin{tabular}{c|r|r||r|r}
\textbf{Size} & \multicolumn{2}{c||}{\textbf{Without Constraints}} & \multicolumn{2}{c}{\textbf{With Constraints}} \\
\cline{2-5}
 & \textbf{Program Space} & \textbf{Runtime (s)} & \textbf{Program Space} & \textbf{Runtime (s)} \\
\hline
1  & 11             & 0.000   & 11             & 0.000   \\
2  & 11             & 0.000   & 11             & 0.000   \\
3  & 374            & 0.000   & 201            & 0.001   \\
4  & 374            & 0.000   & 201            & 0.001   \\
5  & 24 332         & 0.039   & 7 798          & 0.018   \\
6  & 24 332         & 0.039   & 7 798          & 0.018   \\
7  & 2 000 867      & 3.190   & 383 688        & 0.814   \\
8  & 2 000 867      & 3.852   & 383 688        & 0.819   \\
9  & 184 632 701    & 361.538 & 21 192 628     & 43.842  \\
10 & 184 632 701    & 357.280 & 21 192 628     & 45.778  \\
11 & 18 265 184 267 & 35048.885 & 1 254 647 849 & 2592.975 \\
\end{tabular}
\caption{Comparison of program space and runtime with and without constraints in the Arithmetic Grammar}
\label{tab:space_size_arithmetic}
\end{table}

\subsubsection{Robots Grammar}
See \Cref{tab:space_size_robots}.
\begin{table}[h!]
\centering
\begin{tabular}{c|r|r||r|r}
\textbf{Size} & \multicolumn{2}{c||}{\textbf{Without Constraints}} & \multicolumn{2}{c}{\textbf{With Constraints}} \\
\cline{2-5}
 & \textbf{Program Space} & \textbf{Runtime (s)} & \textbf{Program Space} & \textbf{Runtime (s)} \\
\hline
1  & 1                 & 0.000   & 1             & 0.000   \\
2  & 7                 & 0.000   & 7             & 0.000   \\
3  & 43                & 0.000   & 43            & 0.000   \\
4  & 259               & 0.000   & 133           & 0.002   \\
5  & 1 555             & 0.002   & 407           & 0.004   \\
6  & 9 331             & 0.017   & 1 457         & 0.022   \\
7  & 55 987            & 0.093   & 4 025         & 0.071   \\
8  & 335 923           & 0.492   & 11 001        & 0.209   \\
9  & 2 015 539         & 3.270   & 29 666        & 0.538   \\
10 & 12 093 923        & 23.728  & 67 129        & 1.306   \\
11 & 72 559 411        & 141.965 & 161 965       & 2.855   \\
12 & 435 356 467       & 844.853 & 384 853       & 6.794   \\
13 & 2 612 138 803     & (*)     & 292 741       & 12.288  \\
14 & 15 672 832 819    & (*)     & 488 257       & 23.051  \\
15 & 94 037 996 913    & (*)     & 805 863       & 42.535  \\
16 & 564 221 981 491   & (*)     & 1 368 153     & 69.516  \\
17 & 3 385 341 988 947 & (*)     & 1 839 991     & 96.197  \\
18 & 20 312 051 933 683& (*)     & 2 708 965     & 182.691 \\
19 & 121 871 948 002 803 & (*)   & 3 907 683     & 296.374 \\
20 & 731 231 688 012 955 & (*)   & 5 495 589     & 447.164 \\
21 & 4 387 390 128 075 571 & (*) & 7 610 421     & 681.843 \\
\end{tabular}
\caption{Program space and runtime comparison with and without constraints for size 1 to 21 in the Robots Grammar. 
(*) Instead of actual enumeration, the theoretical number of programs was calculated with  $\sum_{k=0}^{n-1} 6^k$.}
\label{tab:space_size_robots}
\end{table}

\subsubsection{Symbolic Grammar}
See \Cref{tab:space_size_symbolic}.

\begin{table}[h!]
\centering
\begin{tabular}{c|r|r||r|r}
\textbf{Size} & \multicolumn{2}{c||}{\textbf{Without Constraints}} & \multicolumn{2}{c}{\textbf{With Constraints}} \\
\cline{2-5}
 & \textbf{Program Space} & \textbf{Runtime (s)} & \textbf{Program Space} & \textbf{Runtime (s)} \\
\hline
1  & 4              & 0.000     & 0        & 0.000   \\
2  & 16             & 0.000     & 0        & 0.000   \\
3  & 100            & 0.000     & 0        & 0.001   \\
4  & 640            & 0.001     & 0        & 0.002   \\
5  & 4 708          & 0.010     & 0        & 0.007   \\
6  & 35 920         & 0.072     & 0        & 0.020   \\
7  & 287 236        & 0.550     & 0        & 0.056   \\
8  & 2 355 328      & 4.469     & 0        & 0.146   \\
9  & 19 763 524     & 43.766    & 0        & 0.406   \\
10 & 168 628 240    & 377.454   & 11       & 1.143   \\
11 & 1 459 357 732  & 3213.797  & 3        & 3.011   \\
12 & -              & -         & 11       & 8.752   \\
13 & -              & -         & 108      & 24.491  \\
14 & -              & -         & 2597     & 71.632  \\
15 & -              & -         & 27 667   & 210.122 \\
16 & -              & -         & 345 428  & 653.034 \\
\end{tabular}
\caption{Comparison of program space and runtime with and without constraints in the Symbolic Grammar}
\label{tab:space_size_symbolic}
\end{table}

\subsubsection{Lists Grammar}
See \Cref{tab:space_size_lists}.
\begin{table}[h!]
\centering
\begin{tabular}{c|r|r||r|r}
\textbf{Size} & \multicolumn{2}{c||}{\textbf{Without Constraints}} & \multicolumn{2}{c}{\textbf{With Constraints}} \\
\cline{2-5}
 & \textbf{Program Space} & \textbf{Runtime (s)} & \textbf{Program Space} & \textbf{Runtime (s)} \\
\hline
1  & 1             & 0.000   & 1             & 0.000   \\
2  & 3             & 0.000   & 1             & 0.000   \\
3  & 13            & 0.000   & 6             & 0.001   \\
4  & 51            & 0.001   & 20            & 0.001   \\
5  & 217           & 0.002   & 31            & 0.003   \\
6  & 951           & 0.010   & 56            & 0.008   \\
7  & 4 297         & 0.036   & 206           & 0.018   \\
8  & 19 887        & 0.124   & 1 026         & 0.042   \\
9  & 93 757        & 0.513   & 1 656         & 0.092   \\
10 & 448 875       & 2.172   & 5 381         & 0.216   \\
11 & 2 176 261     & 8.963   & 16 006        & 0.580   \\
12 & 10 663 563    & 37.675  & 53 631        & 1.159   \\
13 & 52 724 209    & 150.302 & 166 881       & 3.010   \\
14 & 262 718 895   & 771.796 & 548 256       & 6.346   \\
15 & 1 317 979 105 & 3629.086 & 1 779 631     & 15.498  \\
16 & -             & -       & 5 817 631     & 36.277  \\
17 & -             & -       & 19 329 756    & 89.173  \\
18 & -             & -       & 63 692 256    & 250.183 \\
19 & -             & -       & 213 391 631   & 652.133 \\
\end{tabular}
\caption{Comparison of program space and runtime with and without constraints in the  Lists Grammar}
\label{tab:space_size_lists}
\end{table}

\subsection{Performance of the Uniform Solver}
So far, all experiments have been executed using iterators that use both the decomposition and uniform solvers. 
The decomposition solver is used to propagate constraints on non-uniform trees, which has the potential to eliminate an entire search branch of uniform trees. 
It also uses memory-intensive state management to ensure programs are enumerated in increasing order of size. The uniform solver has efficient state management but is restricted to solving uniform trees and a DFS.
\begin{figure}[h!]
    \centering
    \includegraphics[width=0.75\textwidth]{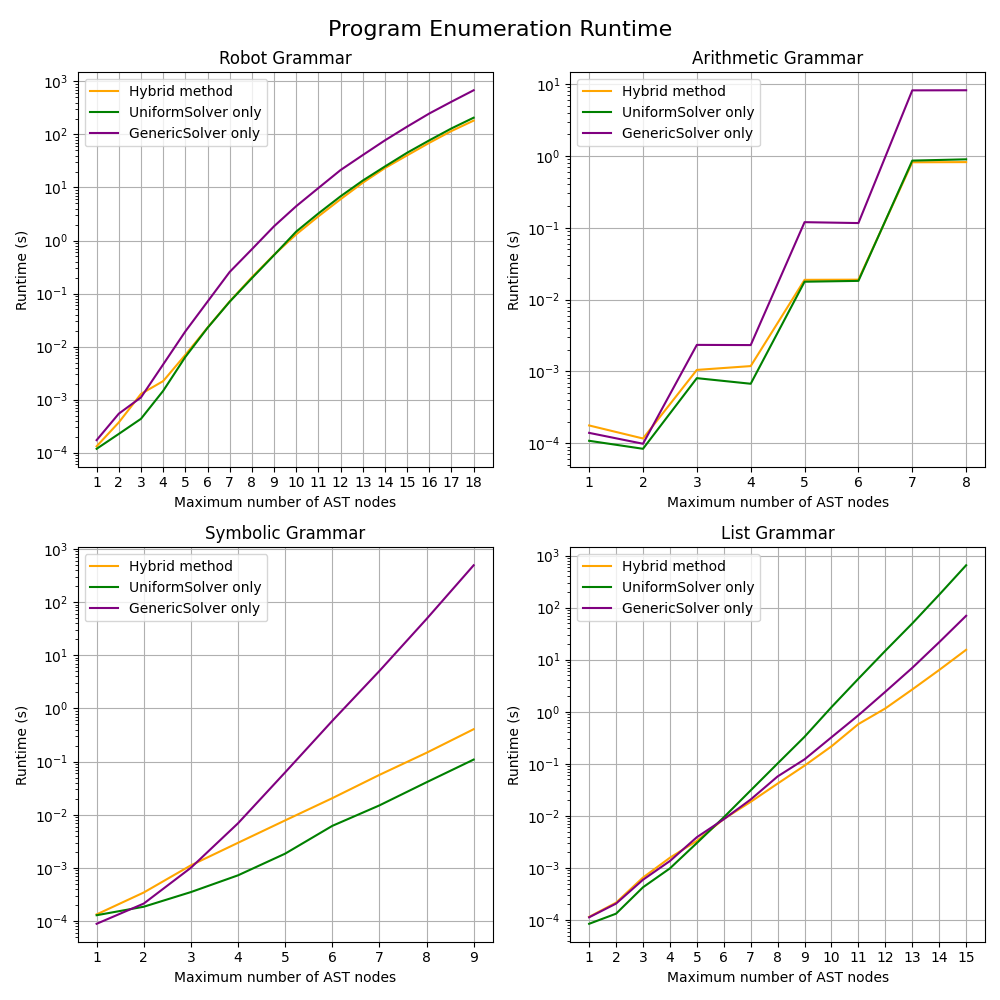}
    \caption{An ablation study. Comparing the runtime of the overall search procedure using both or only 1 of the 2 built-in solvers.}
    \label{fig:AblationStudyRuntime}
\end{figure}

In this section, we conduct an ablation study. 
Instead of just using the proposed hybrid method, we measure the performance by enumerating the program space using only one of the implemented solvers. 

More precisely, we will compare the following three methods:
\begin{enumerate}
    \item Hybrid method: The decomposition solver is used to enumerate uniform trees, and uniform solvers are used to enumerate all complete programs of each uniform tree, as described in \Cref{alg:topdowniterator}.
    \item Uniform solver only: We ignore constraints in the decomposition solver, and only start propagating constraints once a uniform tree is reached.
    \item Decomposition solver only: We never dispatch to the uniform solver. Even uniform trees will be expanded by the decomposition solver
\end{enumerate}

First, using only the decomposition solver does not rely on simple program spaces. 
All holes are treated as non-uniform holes, which significantly reduces the inference power of the constraints.
Second, only using the uniform solver, we can use constraints once they are broken down into local constraints on a uniform hole.
In this setting, all constraints are always posted on all feasible uniform holes. 

\Cref{fig:AblationStudyRuntime} holds the result of the ablation study. 
We see that the hybrid method always outperforms using only the decomposition solver. 
In both the hybrid and decomposition solver methods, the number of uniform trees is exactly the same. 
In other words, the results indicate that solving a uniform tree with the uniform solver, optimized for solving uniform trees, outperforms the decomposition solver.

For the symbolic grammar, only using the uniform solver outperforms the hybrid method (see \Cref{fig:AblationStudyRuntime}). 
This indicates that the overhead of propagating constraints in the decomposition solver does not outweigh the gained inference. 
This suspicion is confirmed by the follow-up experiment in \Cref{fig:AblationStudyUniformTrees1}. 
We see that the decomposition solver (orange line) only eliminates a negligible amount of uniform trees, which means the inference is weak. 
For this particular grammar, using the decomposition solver is not worth its overhead.

On the contrary, for the list grammar (see Figure \ref{fig:AblationStudyRuntime}), the hybrid method outperforms only using the uniform solver. 
This indicates that the decomposition solver has strong inference for this grammar. 
Again, this suspicion is confirmed by the follow-up experiment in Figure \ref{fig:AblationStudyUniformTrees2}. 
We see that the decomposition solver (orange line) eliminates a significant portion of uniform trees. 
A reduced number of uniform trees means that fewer uniform solvers have to be instantiated. 
Hence, the hybrid method will outperform only using the uniform solver.

\begin{figure}[h!]
  \centering
  \begin{subfigure}{0.38\textwidth}
    \centering
    \includegraphics[width=\textwidth]{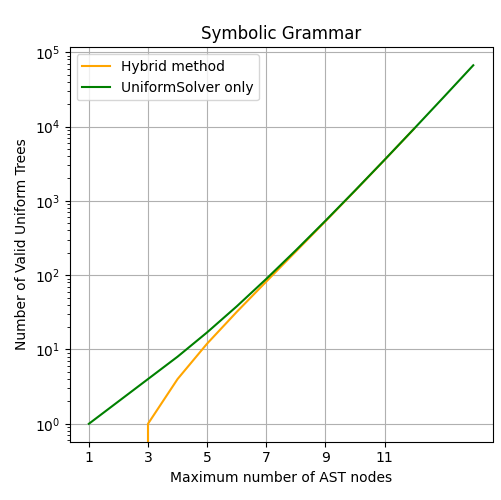}
    \caption{Symbolic Grammar}
    \label{fig:AblationStudyUniformTrees1}
  \end{subfigure}
  \hfill
  \begin{subfigure}{0.38\textwidth}
    \centering
    \includegraphics[width=\textwidth]{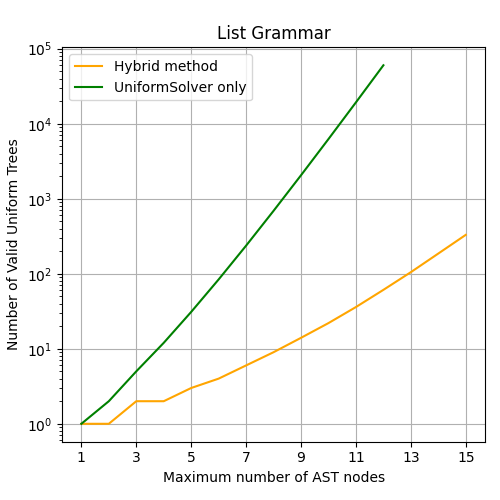}
    \caption{List Grammar}
    \label{fig:AblationStudyUniformTrees2}
  \end{subfigure}
  \caption{Comparing the total number of uniform trees with and without the decomposition solver.}
  \label{fig:AblationStudyUniformTrees}
\end{figure}

\subsection{First-order Constraints} 
We run two ablation studies on first-order constraints. 
First, we compare the first-order and grounded forbidden constraints.
We see a clear positive trend in the number of constraints and the number of propagate calls. 
This also correlates with the runtime of the propagation algorithm.

\begin{figure}[h!]
    \centering
    \includegraphics[width=0.75\textwidth]{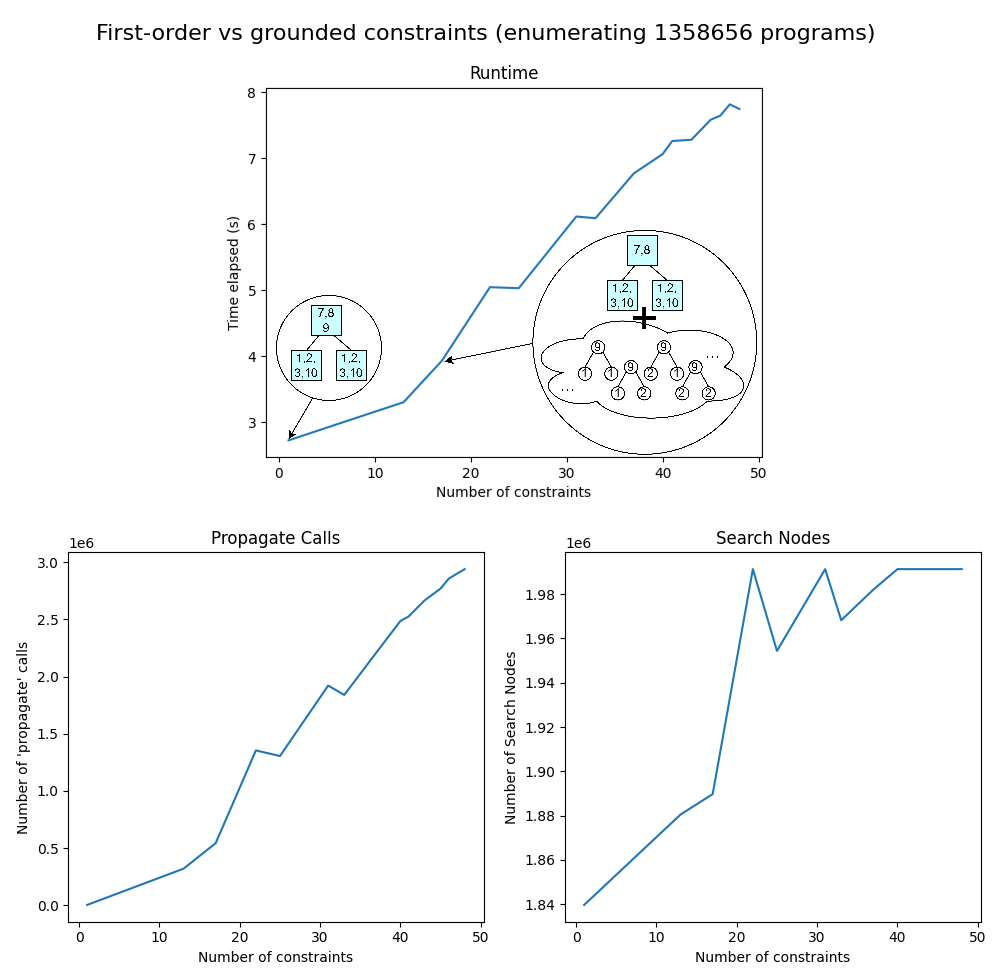}
    \caption{Enumerating programs of the Symbolic Grammar using different combinations of first-order and grounded forbidden constraints. The plots depict runtime, propagate calls, and search nodes, respectively.}
    \label{fig:FirstOrderConstraintsALL}
\end{figure}

\subsection{Bottlenecks}
\label{sec:bottlenecks}

\begin{figure}[]
    \centering
    \includegraphics[width=0.75\textwidth]{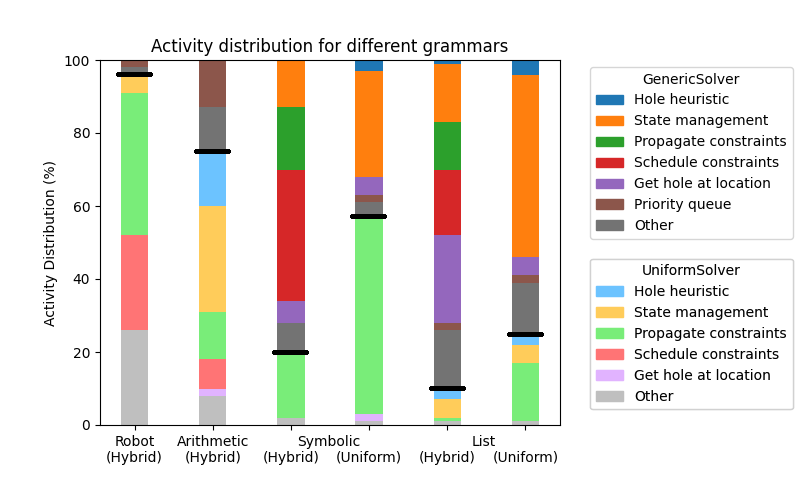}
    \caption{
    The activity distribution of the decomposition and uniform solver in the top-down iterator for several grammars. 
    For the Symbolic- and List grammars, an additional experiment without constraint propagation in the decomposition solver is included (labeled 'Uniform'). 
    The activity was measured using a profiler tool and grouped into high-level categories.}
    \label{fig:ActivityDistribution}
\end{figure}

In this section, we will look at the bottlenecks in the proposed methods to aid developers in optimizing the solvers. 
We can make several observations from the results in Figure \ref{fig:ActivityDistribution}.

For the Robot and Arithmetic grammars, we see that a significant amount of time is spent on posting constraints, almost as much as on propagating them.
This is because propagators are scheduled unnecessarily often. For the robot grammar, only $16\%$ of them were able to make any kind of deduction\footnote{A propagator makes a \textit{deduction} iff it prunes 1 or more rules from 1 or more domains.}.

For the Symbolic grammar, the issue is even more pronounced. 
We see that more time is spent posting than propagating. 
This indicates that many constraints are scheduled, but are unable to make any deductions. 
This is an expected result, as the decomposition solver deals with non-uniform trees, and most constraints can only make deductions if the tree's structure is known. 
In the ablation study (See \Cref{fig:AblationStudyRuntime}), we have seen that if we refrain from propagating constraints in the decomposition solver, we can improve the total runtime by a rough factor of 3. 
However, a more sustainable approach would be to limit the conditions for constraint propagation, thereby reducing the number of unnecessary ones.

\end{document}